\RequirePackage{fix-cm}
\documentclass{svjour3}                     
\smartqed  
\usepackage{amsfonts,amsmath,amssymb}
\usepackage{indentfirst, setspace}
\usepackage{url,float}
\usepackage{color}
\usepackage{hyperref}
\usepackage{longtable}
\setcounter{tocdepth}{3}
\usepackage{graphicx}
\usepackage[a4paper,ignoreall]{geometry}
\usepackage{multicol}
\usepackage{stfloats}
\usepackage{enumerate}
\usepackage{cite}
%
%
%
%
%

\def\qu#1 {\fbox {\footnote {\ }}\ \footnotetext { From Qu: {\color{red}#1}}}
\def\hqu#1 {}
\def\yin#1 {\fbox {\footnote {\ }}\ \footnotetext { From Yin: {\color{blue}#1}}}
\def\hyin#1 {}

\newtheorem{Prob}[theorem]{Problem}
\newtheorem{Lemma}[theorem]{Lemma}
\newtheorem{result}[theorem]{Result}

\newcommand{\tr}{{\rm Tr}}
\newcommand{\rank}{{\rm rank}}

\newcommand{\gf}{{\mathbb F}}

\newcommand{\supp}{{\rm Supp}}

\makeatletter
\newcommand{\figcaption}{\def\@captype{figure}\caption}
\newcommand{\tabcaption}{\def\@captype{table}\caption}
\makeatother

\begin{document}

\title{More Constructions of Differentially 4-uniform Permutations on $\gf_{2^{2k}}$}


\author{Longjiang~Qu\and Yin~Tan\and Chao~Li \and Guang~Gong}

\authorrunning{L. Qu, Y. Tan, C. Li, G. Gong} 

\institute{L. Qu \at
           College of Science,
           National University of Defense Technology, ChangSha, 410073, China.
           \\
           \email{ljqu\_happy@hotmail.com}           
           \and
           Y. Tan and G. Gong \at
           Department of Electrical and Computer Engineering,
           University of Waterloo, Waterloo, Canada.
	   \\
	   \email{\{yin.tan, ggong\}@uwaterloo.ca}
	   \and
	   C. Li\at
           College of Science,
           National University of Defense Technology, ChangSha, 410073, China.
	   He is also with Science and Technology on Information Assurance Laboratory,
	   Beijing, 100072, China.
	   \\
	   \email{lichao\_nudt@sina.com}
}

\date{Received: date / Accepted: date}

\maketitle

\begin{abstract}
    Differentially $4$-uniform permutations on $\gf_{2^{2k}}$ with high nonlinearity
	are often chosen as Substitution boxes in both block and stream ciphers.
	Recently, Qu et al. introduced a class of functions,
	which are called preferred functions, to
	construct a lot of infinite families of such permutations \cite{QTTL}.
	In this paper, we propose a particular type of Boolean functions to characterize the preferred functions.
	On the one hand, such Boolean functions can be determined by solving linear equations,
	and they give rise to a huge number of differentially $4$-uniform permutations
	over $\gf_{2^{2k}}$. Hence  they may provide more choices for the design of Substitution boxes.
	On the other hand, by investigating the number of these Boolean functions,
	we show that the number of CCZ-inequivalent differentially $4$-uniform permutations over $\gf_{2^{2k}}$
	grows exponentially when $k$ increases, which gives a positive answer to an open
	problem proposed in \cite{QTTL}.
\keywords{Differentially $4$-uniform permutation\and Substitution box\and preferred function\and preferred Boolean function.}
\subclass{06E30\and 11T60\and 94A60}
\end{abstract}

\section{Introduction}
In the design of many block ciphers and stream ciphers, permutations with {specific}
properties are chosen as  Substitution boxes to bring the confusion into ciphers.
To prevent various attacks to the cipher and for the software implementation,
such permutations are required to have low differential uniformity \cite{diff},
high algebraic degree \cite{degree}, high nonlinearity \cite{linear}, and being defined
on fields with even degrees, namely $\gf_{2^{2k}}$, {etc.}
Throughout this paper, we always let $n=2k$ be an even integer.

It is well known that the lowest differential uniformity of a function defined on $\gf_{2^n}$
can achieve is $2$ and such functions are called \textit{almost perfect nonlinear} (APN) functions.
However, due to the lack of  knowledge of the APN permutation in even dimension,
constructions of  differentially $4$-uniform permutations with
high algebraic degree and high nonlinearity over $\gf_{2^{2k}}$ have attracted many researchers' interest.
One may refer to \cite{Carl-Lea, carl-tan-tan, carlet-4uni, Carlet13, liWang, LWY, QTTL, QXL, TCT, ZHS}
for recent progress on
this problem. To the best of our knowledge, before \cite{QTTL}, not many infinite families
of differentially $4$-uniform permutations are known.
Among them, the multiplicative inverse function perhaps is the most popular one and it is
endorsed as the Substitution box {in the block ciphers like \textsf{AES} \cite{AES}, \textsf{Camellia} \cite{Camellia},
or as the filtering functions in the stream ciphers like \textsf{SFINK} \cite{SFINK}.}
In \cite{QTTL}, the authors used the so-called switching method ({introduced in \cite{Dillon,EP}})
to construct many infinite families of
such functions, which significantly increase the number of them (see \cite[Table 1]{QTTL}).
More precisely, they introduced a type of functions, which are called \textit{preferred functions},
to discover a lot of differentially $4$-uniform permutations on $\gf_{2^{2k}}$.
All obtained functions are with highest algebraic degree ($2k-1$) and {relatively high}
nonlinearity (see Result \ref{nl1} {and Table \ref{res23}}).

After obtaining many infinite families of differentially $4$-uniform permutations
over $\gf_{2^{2k}}$ as in \cite{QTTL}, determining the number of CCZ-inequivalent classes
among these functions arises as a natural question. It was observed in \cite{QTTL} that this number
grows exponentially when {$k$} grows, and they proposed  an open problem to determine,
or give a lower bound of this number \cite[Problem 4.16]{QTTL}.
Since more constructions of preferred functions lead to more
differentially $4$-uniform permutations ({see} Result \ref{res_general} below),
a characterization of preferred functions is
clearly helpful to solve the above problem. Hence it was proposed in \cite{QTTL} as
another open problem to find more or give a characterization of
preferred functions \cite[Problem 4.15]{QTTL}.

The purpose of this paper is to proceed further the  research of \cite{QTTL} and to
answer the above two problems.

We should mention that a  similar question about the number of CCZ-inequivalent classes of
APN functions has been proposed in \cite{EP}. Recently the results reported in \cite{tan-APN,yu-APN}
gave a positive evidence (thousands of APN functions are discovered
on $\gf_{2^7},\gf_{2^8}$ and $\gf_{2^9}$). However, a theoretical proof of
this problem currently seems unavailable.

In Section 2, we propose a new type of Boolean functions, called \textit{preferred Boolean functions}
{(Definition \ref{def_PFB}, PBF for short)},  to characterize preferred functions.
{Such Boolean functions are shown to be able to}
construct differentially $4$-uniform permutations (Corollary \ref{th_general}).
{More interestingly,} these Boolean functions can be efficiently determined. Precisely,
it is proven in Theorem \ref{ChPBF} that the determination of such Boolean functions can be reduced to
solving linear equations. By estimating the $2$-rank of the coefficient matrix of this linear
equation system, we prove that there exist at least $2^{\frac{2^n+2}{3}}$ preferred Boolean functions
with $n$ variables.
Hence,  by Corollary \ref{th_general}, we construct at least $2^{\frac{2^n+2}{3}}$  differentially $4$-uniform permutations
on $\gf_{2^n}$. This is a huge number, {which} shows that,
{by making use of preferred Boolean functions}, we may give much more differentially $4$-uniform
permutations than those given in \cite{QTTL}.
For example, we have $2^{86}$ preferred Boolean functions on $\gf_{2^8}$, and hence construct
the same number of differentially $4$-uniform permutations on $\gf_{2^8}$ (Corollary \ref{th_general}),
which is a great improvement of the $2^{\frac{n}{2}+7}=2^{11}$ such permutations constructed
in \cite{QTTL} (\cite[Remark 4.14]{QTTL}).

Based on the investigation of the number of preferred Boolean functions, in Theorem \ref{main1},
we show that the number of CCZ-inequivalent differentially $4$-uniform permutations
over $\gf_{2^n}$ {among those constructed in Corollary \ref{th_general}}, denoted by $N(n)$,
is at least $2^{\frac{2^n+2}{3}-4n^2-2n}$. This shows that
the number of differentially $4$-uniform permutations
on $\gf_{2^{2k}}$ \textbf{does} increase exponentially when $k$ grows.

In Table \ref{table-thm1} below, the value or a lower bound of $N(n)$ is listed for $6\leq n\leq 16$.
Some remarks on the table are as follows.
One may be curious about why $N(6)$ seems to be quite large while the lower bounds of $N(8)$ and $N(10)$
seems to be weak. The reasons are as follows. On the one hand,  thanks to Theorem \ref{ChPBF},
it can be shown that there are $2^{22}$ preferred Boolean functions on $\gf_{2^6}$. Hence
by Corollary \ref{th_general} we may construct $2^{22}$ differentially $4$-uniform permutations on $\gf_{2^6}$.
The small number of PBFs  on this field enables us to perform an exhaustive search
of CCZ-inequivlaent such functions on $\gf_{2^6}$, namely to determine the exact value of $N(6)$.
The method we distinguish two CCZ-inequivalent
functions is to compare whether the codes corresponding to them are equivalent
(see {\cite{CCZ} or \cite[page 43]{carlet-book}}).
On the other hand,  when $n=8,10$, the authors in \cite{QTTL} only counted the number of differentially $4$-uniform
permutations with different differential spectrum, which is only an invariant of CCZ-equivalence.

\begin{center}
   \tabcaption{The number of CCZ-inequivalent differentially $4$-uniform
	   permutations among those \newline constructed in Corollary \ref{th_general} on $\gf_{2^n}$
when $6\leq n\leq 16$ ($n$ even) }
       \begin{tabular}{|c|c|c|c|} \hline
       \label{table-thm1}
        $n$    & $ N(n) $   & Ref                          \\ \hline
	$6$    & $11,120$   & Exhaustive search       \\ \hline
	$8$    & ${\geq} 107$      & \cite[Table 1]{QTTL}         \\ \hline
	$10$   & ${\geq} 183$      & \cite[Table 1]{QTTL}         \\ \hline
	$12$   & ${\geq} 2^{766}$      & Theorem \ref{main1}(iii)         \\ \hline
	$14$   & ${\geq} 2^{4650}$    & Theorem \ref{main1}(iii)  \\ \hline
	$16$   & ${\geq} 2^{20790}$   & Theorem \ref{main1}(iii)           \\ \hline
 \end{tabular}
\end{center}

In Section 2.3, we relate the lower bound of $N(n)$
to the rank of the coefficient matrix $M$ of
certain linear equation system  (Problem \ref{prob_M}), and the problem to determine a lower bound of
the number of the functions which are CCZ-equivalent to a given function (Problem \ref{prob_CCZNum}).
For Problem \ref{prob_M}, in Section 2.4, we provide a possible method {to solve it}
by relating the rank of the coefficient
matrix to the existence of a $3$-regular subgraph
of the graph determined by the matrix (Definition \ref{def_G}). Some properties of this graph
are also presented therein.

As known from Theorem \ref{main1}(i), the set of all preferred Boolean functions is a $\gf_2$-subspace.
In Section 3, we investigate the so-called \textit{non-decomposable} preferred Boolean
functions. A preferred Boolean function $f$ is called \textit{non-decomposable} if it is not the sum
of the other two preferred Boolean functions whose support sets are
 proper subsets of the support of $f$. In Theorem \ref{NonDecSolution}, we present the characterization
of these functions in terms of their support sets. Then many
explicit constructions of differentially $4$-uniform permutations are presented.
It is an evidence that preferred Boolean functions is a more efficient tool than preferred functions
to construct differentially $4$-uniform permutations.

The rest of the paper is organized as follows. In Section 2, we first recall some results {appeared}
in \cite{QTTL}, and then introduce the definition and the properties of the preferred Boolean functions
and its application to construct differentially $4$-uniform permutations, and
followed by presenting a lower bound of the number of CCZ-inequivalent classes of constructed permutations.
In Section 2, we focus on the characterization of
non-decomposable preferred Boolean functions. Some concluding remarks are given in Section 4.

We end this section by introducing some notations.
Given two positive integers $n$ and $m$, a function
$F:\gf_{2^n}\rightarrow\gf_{2^m}$ is called an
\textit{$(n,m)$-function}. Particularly, when $m=1$, $F$ is called an $n$-variable
\textit{Boolean function}, or a \textit{Boolean function} with $n$ variables.
Clearly, a Boolean function may be regarded as a vector
with elements on $\gf_2$ of length $2^n$ by identifying $\gf_{2^n}$
with a vector space $\gf_2^n$ of dimension $n$ over $\gf_2$.
In the following, we will switch between these two points of view without
explanation if the context is clear. Let $f$ be a nonzero Boolean function.
Define the set $\supp(f)=\{x\in \gf_{2^n}| f(x)=1\}$ and call it the \textit{support set} of $f$.
The value $|\supp(f)|$ is called the \textit{(Hamming) weight} of $f$.
Denote by $\tr(x)=\sum_{i=0}^{n-1}x^{2^i}$ the absolute trace function
from $\gf_{2^n}$ to $\gf_{2}$.
Note that for the multiplicative inverse function
$x^{-1}$, we define $0^{-1} = 0$ as usual.

\section{Preferred Functions and Preferred Boolean Functions}
In this section, we
introduce preferred Boolean functions to
characterize preferred functions. Then the properties of preferred Boolean functions
are investigated. Finally, a lower bound of the number of CCZ-inequivalent differentially
$4$-uniform permutations on $\gf_{2^{2k}}$ is presented.

\subsection{Preferred functions}

First, we recall the definition of preferred functions.
\begin{definition} \cite{QTTL}
	\label{def_PF}
	Let $n=2k$ be an even integer and $R$ be an $(n,n)$-function.
	Define the Boolean function $D_R$ by
        \begin{eqnarray}
		\label{eq_T0}
                 D_R(x)=\tr (R(x+1)+R(x)),
        \end{eqnarray}
        and define the functions $Q_R, P_R$ by
        \begin{eqnarray}
           \label{eq_Q0}
	       Q_R(x,y) &=& D_R\left(\frac{1}{x}\right)+D_R\left(\frac{1}{x}+y\right),\ \mbox{and}\\
           \label{eq_P0}
	       P_R(y) &=& Q_R(0,y)=D_R(0)+D_R(y).
        \end{eqnarray}
        Let $U$  be the subset of $\gf_{2^n}\times\gf_{2^n}$ defined by

        \begin{eqnarray}
	  \label{U}
	     U = \{(x,y)|x^2+\frac{1}{y}x+\frac{1}{y(y+1)}=0, y\not\in\gf_2\}.
        \end{eqnarray}
        If
        \begin{eqnarray}
            \label{eq_Preferred}
               Q_R(x, y)+P_R(y)=0
        \end{eqnarray}
        holds for any pair $(x,y)$ in $U$, then we call $R$ a
        \textit{preferred function} (PF for short) on $\gf_{2^n}$, or call it
        \textit{preferred} on $\gf_{2^n}$. In particular, if $D_R$ is the
        zero function, then clearly $R$ is a PF and it is called \textit{trivial}.
\end{definition}

Preferred functions may be used to construct differentially $4$-uniform permutations over $\gf_{2^{2k}}$
as follows.

\begin{result}
     \cite[Theorem 3.6]{QTTL}
     \label{res_general}
      Let $n=2k$ be an even integer, $I(x)=x^{-1}$ be the multiplicative inverse function and $R$ be an
      $(n,n)$-function. Define
      \begin{eqnarray*}
	 && H(x)=x + \tr(R(x)+R(x+1)),\ \mbox{and} \\
	 && G(x)=H(I(x)).
      \end{eqnarray*}
      If $R(x)$ is a preferred function, then $G(x)$ is a differentially
      $4$-uniform permutation polynomial on $\gf_{2^n}$.
\end{result}

A lower bound of the nonlinearity of the functions defined in the above result is as follows.
\begin{result}
     \cite[Theorem 5.4]{QTTL}
	\label{nl1}
	For any positive even integer $n$, let $G$ be a function as
	defined in Result \ref{res_general}.
	Then we have
	$\mbox{NL}(G)\geq 2^{n-2}-\frac{1}{4}\lfloor2^{\frac{n}{2}+1}\rfloor-1,$
{where $\mbox{NL}(G)$ denotes the nonlinearity of the function $G$}.
\end{result}

\subsection{Preferred Boolean functions}

Now we introduce a new type of Boolean functions, called \textit{preferred Boolean function} below,
to characterize preferred functions.

\begin{definition}
	\label{def_PFB}
        Let $n=2k$ be an even integer and $f$ be an $n$-variable Boolean function.
        We call $f$ a \textit{preferred Boolean function} (PBF for short)
        if it satisfies the following two conditions:
        \begin{enumerate}[(i)]
           \item $f(x+1)=f(x)$ for any $x\in \gf_{2^n}$;
           \item $f\left(\frac{1}{x}\right)+f\left(\frac{1}{x}+y\right)+f(0)+f(y)=0$
                   for any pair $(x,y)\in U$, where $U$ is defined by \eqref{U} in Definition \ref{def_PF}.
        \end{enumerate}
\end{definition}

The following result points out the relationship between preferred functions and preferred Boolean functions.
\begin{proposition}
        \label{PFBrela}
	Let $n=2k$ be an even integer and $R$ be an $(n,n)$-function. Define the Boolean
	function $D_R$ as in (\ref{eq_T0}).
	Then $R$ is a preferred function  if and only if $D_R$ is a preferred Boolean function.
        Furthermore, for any preferred Boolean function $f$ with $n$ variables, there are
	$2^{n\cdot 2^n-2^{n-1}}$ preferred functions $R$ such that $D_R(x)=f(x)$.
	In particular, there are  $2^{n\cdot 2^n-2^{n-1}}$
	trivial preferred functions on $\gf_{2^n}$.
\end{proposition}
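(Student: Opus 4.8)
The plan is to exploit the fact that $D_R$ depends on $R$ only through the Boolean function $g=\tr\circ R$. Since the trace is $\gf_2$-linear,
\[
D_R(x)=\tr(R(x+1))+\tr(R(x))=g(x+1)+g(x),
\]
so the assignment $R\mapsto D_R$ factors as $R\mapsto g\mapsto D_R$. For the equivalence, I would first observe that condition (i) of Definition \ref{def_PFB} holds automatically for every function of the form $D_R$: replacing $x$ by $x+1$ and using $(x+1)+1=x$ gives $D_R(x+1)=g(x)+g(x+1)=D_R(x)$. Next, substituting $f=D_R$ into condition (ii) and comparing with the definitions of $Q_R$ and $P_R$ in \eqref{eq_Q0}--\eqref{eq_P0}, one sees that condition (ii) reads exactly $Q_R(x,y)+P_R(y)=0$ for all $(x,y)\in U$, which is the defining relation \eqref{eq_Preferred} of a preferred function. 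Hence $R$ is preferred if and only if $D_R$ satisfies (i) and (ii), i.e.\ if and only if $D_R$ is a preferred Boolean function.

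For the counting statement, I would analyze the two stages of the factored map separately. The map $T\colon R\mapsto \tr\circ R$ sends $(n,n)$-functions to Boolean functions; for a fixed $g$, the fiber $T^{-1}(g)$ consists of those $R$ with $\tr(R(x))=g(x)$ for every $x$, and since exactly half of $\gf_{2^n}$ has any prescribed trace value, each fiber has size $(2^{n-1})^{2^n}=2^{(n-1)2^n}$. The second map $\Psi\colon g\mapsto(x\mapsto g(x)+g(x+1))$ is $\gf_2$-linear on the $2^n$-dimensional space of Boolean functions. Its kernel consists of the functions invariant under the involution $x\mapsto x+1$; since this involution has no fixed points and partitions $\gf_{2^n}$ into $2^{n-1}$ two-element orbits, we obtain $|\ker\Psi|=2^{2^{n-1}}$, whence a dimension count gives $\dim(\image\,\Psi)=2^{n}-2^{n-1}=2^{n-1}$.

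The step I expect to require the most care is the characterization of the image of $\Psi$. Every value of $\Psi$ is invariant under $x\mapsto x+1$ (the same computation as for condition (i)), so $\image\,\Psi$ is contained in the space of functions satisfying $f(x+1)=f(x)$; since that space also has dimension $2^{n-1}$, the two coincide. The crucial consequence is that a preferred Boolean function $f$, satisfying condition (i), automatically lies in $\image\,\Psi$. Therefore $\Psi(g)=f$ is solvable, and its solution set is a coset of $\ker\Psi$ of size $2^{2^{n-1}}$. Combining the two stages, the number of $R$ with $D_R=f$ equals
\[
|\ker\Psi|\cdot|T^{-1}(g)|=2^{2^{n-1}}\cdot 2^{(n-1)2^n}=2^{\,n\cdot 2^n-2^{n-1}},
\]
using $2^{n-1}+(n-1)2^n=n\cdot 2^n-2^{n-1}$. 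By the first part, every such $R$ is a preferred function.

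Finally, for the \emph{in particular} clause, the trivial preferred functions are by definition those with $D_R=0$. Since the zero function visibly satisfies (i) and (ii) and is thus a preferred Boolean function, applying the count just obtained to $f=0$ yields exactly $2^{\,n\cdot 2^n-2^{n-1}}$ trivial preferred functions on $\gf_{2^n}$, completing the proof.
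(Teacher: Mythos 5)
Your proof is correct, and its counting argument is organized differently from the paper's. The paper proves the second assertion by a direct enumeration over the $2^{n-1}$ pairs $\{x,x+1\}$: for each pair, $R(x)$ may be any of the $2^n$ field elements and $R(x+1)$ any of the $2^{n-1}$ elements with the prescribed trace, giving $\left(2^n\cdot 2^{n-1}\right)^{2^{n-1}}=2^{n\cdot 2^n-2^{n-1}}$ functions at once; solvability of $D_R=f$ is implicit in this construction (it silently uses $f(x+1)=f(x)$ so that imposing the equation at one representative of each pair suffices). You instead factor $R\mapsto D_R$ through $g=\tr\circ R$ and split the count into the fiber size $2^{(n-1)2^n}$ of $T\colon R\mapsto g$ and the kernel size $2^{2^{n-1}}$ of the linear map $\Psi\colon g\mapsto g(\cdot)+g(\cdot+1)$, then use a dimension count to identify $\image\,\Psi$ with the space of functions satisfying condition (i). This is slightly longer but has a real advantage: it makes explicit exactly where condition (i) of the PBF definition enters, namely in guaranteeing that $f$ lies in $\image\,\Psi$ so that the fiber over $f$ is nonempty, a point the paper's proof glosses over; it also exhibits the solution set as a coset of $\ker\Psi$, which is structurally informative. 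The paper's approach buys brevity and avoids any appeal to rank--nullity. Both arguments handle the equivalence statement the same way (the paper simply says it follows from the definitions; you spell out that condition (i) is automatic for any $D_R$ and condition (ii) is verbatim the defining relation \eqref{eq_Preferred}), and both derive the trivial-PF count by specializing to $f=0$.
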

\begin{proof}
        The first part follows directly from the definitions of PF and PBF.
	Now, given a PBF $f$ on $\gf_{2^n}$, suppose that $R$ is an $(n,n)$-function
	satisfying  $D_R(x)=\tr(R(x+1)+R(x))=f(x)$.
	Next we count the number of such $R$.
	Note that there are $2^{n-1}$ pairs of $(x,x+1)$ in $\gf_{2^n}$. For each such pair,
	$R(x)$ can be any value in $\gf_{2^n}$, and $R(x+1)$ can
	be any value such that   $\tr(R(x+1)+R(x))=f(x)$ is fixed, which means that
	$R(x+1)$ can take $2^{n-1}$ different values. Thus we have
	in total  $\left( 2^{n} \cdot 2^{n-1}\right)^{2^{n-1}}=2^{n\cdot 2^n-2^{n-1}}$
	preferred functions $R$ such that $D_R(x)=f(x)$. The statement about the number of
	trivial preferred functions then follows. We complete the proof.
$\hfill\Box$\end{proof}

By Result \ref{res_general} and Proposition \ref{PFBrela}, it is clear
that PBFs can be used to construct differentially
$4$-uniform permutations.

\begin{corollary}
	\label{th_general}
	Let $n=2k$ be an even integer, $I(x)=x^{-1}$ be the
	multiplicative inverse function and $f$ be a {Boolean function with $n$ variables.}
	Define
	\begin{eqnarray*}
		&& H(x)=x + f(x),\ \mbox{and} \\
		&& G(x)=H(I(x)).
	\end{eqnarray*}
	If $f(x)$ is a preferred Boolean function, then $G(x)$ is a
	differentially $4$-uniform permutation polynomial {on $\gf_{2^n}$}.
\end{corollary}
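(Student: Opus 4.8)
The plan is to reduce Corollary \ref{th_general} to the already-established Result \ref{res_general} by exhibiting, for any preferred Boolean function $f$, a preferred function $R$ whose associated $D_R$ equals $f$. First I would invoke Proposition \ref{PFBrela}: since $f$ is a PBF, that proposition guarantees the existence of (in fact $2^{n\cdot 2^n-2^{n-1}}$) $(n,n)$-functions $R$ satisfying $D_R(x)=\tr(R(x+1)+R(x))=f(x)$, and moreover each such $R$ is a preferred function. Fix any one such $R$.

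The key observation is then that the two constructions coincide. In Result \ref{res_general} one sets $H(x)=x+\tr(R(x)+R(x+1))=x+D_R(x)=x+f(x)$, which is exactly the $H$ defined in the statement of the corollary. Consequently $G(x)=H(I(x))$ is identical in both settings. Since $R$ is preferred, Result \ref{res_general} applies verbatim and yields that $G$ is a differentially $4$-uniform permutation polynomial on $\gf_{2^n}$. This completes the argument.

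In short, the corollary is a direct specialization: the map $f\mapsto H(x)=x+f(x)$ depends on $R$ only through $D_R$, so passing from preferred functions to preferred Boolean functions loses no information about the resulting $G$. I do not anticipate a genuine obstacle here, as all the analytic content (the differential uniformity bound) is carried by Result \ref{res_general}; the only point requiring care is confirming that the correspondence $D_R=f$ is realizable, which is precisely what Proposition \ref{PFBrela} supplies. The one thing I would double-check is that the definition of $H$ in Result \ref{res_general} uses $\tr(R(x)+R(x+1))$ while $D_R$ in \eqref{eq_T0} is written as $\tr(R(x+1)+R(x))$; these agree since addition in $\gf_2$ is commutative, so no sign or ordering issue arises.
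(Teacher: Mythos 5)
Your proposal is correct and follows essentially the same route as the paper, which derives Corollary \ref{th_general} directly by combining Proposition \ref{PFBrela} (realizing the given PBF $f$ as $D_R$ for some preferred function $R$) with Result \ref{res_general}. Your spelled-out version, including the check that $H(x)=x+\tr(R(x)+R(x+1))=x+f(x)$ makes the two constructions coincide, is exactly the intended argument.
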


\subsection{Number of CCZ-inequivalent differentially $4$-uniform permutations}

It is clear that $f$ is a preferred Boolean function if and only if so is $f+1$.
For convenience, we assume that $f(0)=0$ in the rest of the paper.
The following result provides a method to tell whether a Boolean function is
a PBF.

\begin{theorem}
	\label{ChPBF}
	Let $n=2k$ be an even integer and $f$ be an $n$-variable Boolean function.
	Let $\omega$ be an element of $\gf_{2^n}$ with order 3. Then $f$
	is a preferred Boolean function if and only if it satisfies the following two {conditions}:
	\begin{enumerate}[(i)]
	   \item $f(x+1)=f(x)$ for any $x\in \gf_{2^n}$;
	   \item $f\left(\alpha+\frac{1}{\alpha}\right)+f\left(\omega\alpha+\frac{1}{\omega\alpha}\right)
		    +f\left(\omega^2\alpha+\frac{1}{\omega^2\alpha}\right)=0$
	            for any $\alpha\in \gf_{2^n}\setminus \gf_{4}$.
	\end{enumerate}
\end{theorem}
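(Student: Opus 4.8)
The plan is to prove the equivalence by keeping condition (i) common to both characterizations and showing that, modulo (i) and the standing normalization $f(0)=0$, condition (ii) of Definition \ref{def_PFB} and condition (ii) of the theorem quantify the \emph{same} family of three-term linear relations on $f$. The first step is to put the theorem's relation in symmetric form: writing $t_j=\omega^{j-1}\alpha+\omega^{1-j}\alpha^{-1}$ for $j=1,2,3$ and using $1+\omega+\omega^2=0$ and $\omega^3=1$, a short computation gives $t_1+t_2+t_3=0$ and $t_1t_2+t_2t_3+t_3t_1=1$, so the $t_j$ are exactly the three roots of the normal-form cubic $T^3+T+(\alpha^3+\alpha^{-3})$. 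In particular $t_3=t_1+t_2$ and $t_1^2+t_1t_2+t_2^2=1$, two identities I will use repeatedly.

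Next I would simplify the defining condition of a PBF. Substituting $u=1/x$ turns the defining equation of $U$ in \eqref{U} into $u^2+(y+1)u+y(y+1)=0$, whose two roots $u,u'$ satisfy $u+u'=y+1$; using (i) one checks that both roots yield the \emph{same} relation, which after applying $f(0)=0$ becomes $f(u)+f(u+y)+f(y)=0$. Since $u+(u+y)+y=0$, this is a three-term relation on the zero-sum triple $\{u,\,u+y,\,y\}$, matching the shape found on the theorem side.

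The heart of the argument is an explicit, mutually inverse correspondence between the two parameters. Given $\alpha\in\gf_{2^n}\setminus\gf_4$, I set $y=t_1+1$ and $u=t_2+1$; the identity $t_1^2+t_1t_2+t_2^2=1$ shows precisely that this $u$ solves $u^2+(y+1)u+y(y+1)=0$, so with $x=1/u$ the pair $(x,y)$ lies in $U$, and since $\alpha\notin\gf_4$ forces $\alpha+\alpha^{-1}\notin\gf_2$ we get $y\notin\gf_2$. Reading the reduced PBF relation for this $(x,y)$ and cancelling the additive $1$'s by (i) returns exactly $f(t_1)+f(t_2)+f(t_3)=0$, which gives the implication from Definition \ref{def_PFB}(ii) to the theorem's (ii). For the converse, starting from $(x,y)\in U$ I take $\alpha$ to be a root of $\alpha^2+(y+1)\alpha+1=0$, whereupon $t_1=\alpha+\alpha^{-1}=y+1$, and I check that $\{t_2,t_3\}$ and $\{u+1,\,u+y\}$ have the same sum $t_1$ and the same product $t_1^2+1=y^2$, hence coincide; thus the triples agree and the two relations are identical.

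The main obstacle is the field-membership of $\alpha$ in this converse step. The quadratic $\alpha^2+(y+1)\alpha+1=0$ has a root in $\gf_{2^n}$ iff $\tr\!\big(1/(y+1)\big)=0$, so I must tie this to the standing hypothesis $(x,y)\in U$. That hypothesis forces the $U$-quadratic to be solvable, which is equivalent to $\tr\!\big(y/(y+1)\big)=0$; writing $y/(y+1)=1+1/(y+1)$ and using that $n$ is even, so $\tr(1)=0$, gives $\tr\!\big(1/(y+1)\big)=\tr\!\big(y/(y+1)\big)=0$, exactly what is needed to land $\alpha$ in $\gf_{2^n}$. I will also record the non-degeneracy points that keep the correspondence honest: $x\neq0$ (so $u=1/x$ is defined and nonzero, since the constant term $y(y+1)\neq0$), $\alpha\neq0$ (the product of the roots of $\alpha^2+(y+1)\alpha+1$ is $1$), and the equivalence $\alpha\in\gf_4\Leftrightarrow y\in\gf_2$, which aligns the exclusion $\gf_{2^n}\setminus\gf_4$ with the condition $y\notin\gf_2$. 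The evenness of $n$ (equivalently $3\mid 2^n-1$, which also places $\omega$ in $\gf_4\subseteq\gf_{2^n}$) enters in an essential way at this step.
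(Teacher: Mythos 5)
Your proof is correct, and while it establishes the same underlying correspondence as the paper --- between pairs $(x,y)\in U$ and parameters $\alpha\in\gf_{2^n}\setminus\gf_4$ via $y+1=\alpha+\alpha^{-1}$, with the same trace bookkeeping ($\tr(1/(y+1))=\tr(y/(y+1))=0$ since $n$ is even), the same use of the normalization $f(0)=0$, and the same boundary matching $\alpha\in\gf_4\Leftrightarrow y\in\gf_2$ --- your verification of that correspondence is genuinely different. The paper solves the $U$-quadratic explicitly: it cites a lemma of Lachaud--Wolfmann for the parametrization $y+1=\alpha+\alpha^{-1}$, substitutes $x=\frac{1}{y}(z+\omega)$ to reduce to $z^2+z+\frac{1}{y+1}=0$, finds the roots $z=\frac{1}{\alpha+1}$ and $z=\frac{\alpha}{\alpha+1}$, and then computes $1/x$ in closed form to exhibit $\left\{\frac{1}{x}+1,\frac{1}{x}+y\right\}=\left\{\omega^2\alpha+\frac{1}{\omega^2\alpha},\,\omega\alpha+\frac{1}{\omega\alpha}\right\}$ in the two possible orders. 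You avoid all explicit root extraction: the symmetric-function identities $t_1+t_2+t_3=0$ and $t_1t_2+t_2t_3+t_3t_1=1$ (equivalently $t_1^2+t_1t_2+t_2^2=1$) let you write down the solution $u=t_2+1$, $y=t_1+1$ of $u^2+(y+1)u+y(y+1)=0$ directly in the forward direction, and in the converse direction you identify $\{t_2,t_3\}$ with $\{u+1,u+y\}$ by Vieta --- both pairs have sum $y+1$ and product $y^2$, hence are the root set of the same quadratic --- without ever deciding which element is which; you also replace the citation of Lachaud--Wolfmann by the standard trace criterion for $\alpha^2+(y+1)\alpha+1=0$, making the argument self-contained. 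What the paper's computation buys is the explicit closed form for $1/x$, i.e.\ knowing exactly which root corresponds to which ordering of the pair; what your argument buys is a shorter, less error-prone verification and the structural observation, nowhere explicit in the paper, that the triple sets are precisely the zero-sum triples with second elementary symmetric function equal to $1$, i.e.\ the root sets of the cubics $T^3+T+c$.
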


\begin{proof}
	We first determine the elements in $U$ ({recall its definition in (\ref{U})}).
	Given any $(x,y)\in U$, we have
	$x^2+\frac{1}{y}x+\frac{1}{y(y+1)}=0$. {It is easy to see that} this equation has solutions in $\gf_{2^n}$ if and
	only if
	$\tr\left(\frac{1}{y(y+1)}\cdot y^2\right)=\tr\left(\frac{y}{1+y}\right)=\tr(\frac{1}{1+y})=0$.
	Furthermore, by \cite[Lemma 4.1]{lachaud},
	for such $y$, there exists $\alpha\in\gf_{2^n}^*$ such that $y+1=\alpha+\alpha^{-1}$.

	Next we solve the equation
	\begin{equation}
		\label{eqU}
		x^2+\frac{1}yx+\frac{1}{y(y+1)}=0.
	\end{equation}
	{Recall that the existence of $\omega\in\gf_{2^n}$ with order 3 is guaranteed by $3\mid 2^{2k}-1$
	and such element satisfies $1+w+w^2=0$.
	Let $z\in \gf_{2^n}$ such that} $x=\frac{1}{y}(z+\omega)$. Then we have
	$z^2+z+\frac{1}{y+1}=0$. With $y+1=\alpha+\alpha^{-1}$, we get
	$z=\frac{1}{\alpha+1}$ or $z=\frac{\alpha}{\alpha+1}$.
        If $z=\frac{1}{\alpha+1}$, then
\begin{eqnarray*}
  \frac{1}{x} &=& \frac{y}{z+w}=\frac{\alpha+\alpha^{-1}+1}{(\alpha+1)^{-1}+w}=\frac{(\alpha+1)^2\alpha+(\alpha+1)}{\alpha(w(\alpha+1)+1)} \\
   &=& \frac{\alpha^3+1}{w\alpha(\alpha+w)}= \frac{\alpha^2+w\alpha +w^2}{w\alpha } = \omega^2\alpha+\frac{1}{\omega^2\alpha} +1.
\end{eqnarray*}
Hence
	\begin{equation*}
	  \left( \frac{1}{x}+1, \frac{1}{x}+y \right)=
	  \left(\omega^2\alpha+\frac{1}{\omega^2\alpha},
	        \omega\alpha+\frac{1}{\omega\alpha} \right).
	\end{equation*}
	If $z=\frac{\alpha}{1+\alpha}$, similar
	computations give the following result:
	\begin{equation*}
	  \left( \frac{1}{x}+1, \frac{1}{x}+y \right)
	  =\left( \omega\alpha+\frac{1}{\omega\alpha}, \omega^2\alpha+\frac{1}{\omega^2\alpha} \right).
	\end{equation*}
	The rest of the proof follows from the definition of PBF, the assumption $f(0)=0$ and
	the fact that $y\in \gf_2$ if and only if $\alpha \in \gf_4$.
	We complete the proof.
$\hfill\Box$\end{proof}

The characterization of PBFs in Theorem \ref{ChPBF} is useful to determine all PBFs.
First note that, when $\alpha \notin \gf_4$, the elements
$\alpha+\frac{1}{\alpha}, \omega\alpha+\frac{1}{\omega\alpha}, \omega^2\alpha+\frac{1}{\omega^2\alpha}$ are
all distinct (since the sum of them is $0$, and none of them can be zero).
In the following result, we denote the set of all PFs and PBFs by $\mathcal{PF}$ and $\mathcal{PBF}$,
respectively.

\begin{theorem}
  \label{main1}
	Let $n=2k$ be an even integer. Then the following results hold:
	\begin{enumerate}[(i)]
	  \item The set $\mathcal{PF}$ (resp. $\mathcal{PBF}$) are $\gf_2$-subspaces of the set of
		  all $(n, n)$-functions (resp. the set of all $n$-variable Boolean functions).

	   \item
	   {
		Define the following two sets:
		\begin{eqnarray*}
			L_1 &=& \Big\{ \{x, x + 1\} : x\in\gf_{2^n}\setminus\gf_2  \Big\}, \\
			L_2 &=& \left\{ \{ \alpha+\frac{1}{\alpha}, \omega\alpha+\frac{1}{\omega\alpha},
					\omega^2\alpha+\frac{1}{\omega^2\alpha} \} :
					\alpha\in \gf_{2^n}\setminus\gf_4 \right\}.
		\end{eqnarray*}
		Let $v_{x}$ and $v_{\alpha}$ be the characteristic function
		in $\gf_{2^n}\setminus\gf_2$ of each
		$\{ x, x+1 \} \in L_1 $ and
		$\Big\{ \alpha+\frac{1}{\alpha}, \omega\alpha+\frac{1}{\omega\alpha},
		 \omega^2\alpha+\frac{1}{\omega^2\alpha} \Big\}\in L_2$,
		respectively.
		Define the $(|L_1|+|L_2|)\times (2^n-2)$ matrix $M$ by
		\begin{equation}
		  \label{M}
			M = \left[
				\begin{array}{lll}
					  & v_x &  \\
					  & v_{\alpha} &
				  \end{array}
			  \right],
		\end{equation}
		where the columns and rows of $M$ are indexed by the elements in $\gf_{2^n}\setminus \gf_2$ and
		$L_1\cup L_2$ respectively.}
		Then the dimension of $\mathcal{PBF}$ is $2^n-1-\rank(M)$, and the dimension of
		$\mathcal{PF}$ is $ n\cdot 2^{n}+2^{n-1}-1-\rank(M)$.
	  \item There are at least $2^{\frac{2^{n}+2}{3}-4n^2-2n}$ CCZ-inequivalent
		  differentially $4$-uniform permutations over $\gf_{2^n}$
		  among all the functions constructed by Corollary \ref{th_general}
		  or equivalently, by Result \ref{res_general}.
	\end{enumerate}
\end{theorem}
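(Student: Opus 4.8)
The plan is to prove the three parts in order, since each feeds the next. For part (i) I would note that, after fixing $f(0)$, both conditions characterizing a PBF in Theorem~\ref{ChPBF} are homogeneous $\gf_2$-linear relations among the unknown values of $f$: condition~(i) reads $f(x)+f(x+1)=0$, and condition~(ii) reads $f(\alpha+\tfrac{1}{\alpha})+f(\omega\alpha+\tfrac{1}{\omega\alpha})+f(\omega^2\alpha+\tfrac{1}{\omega^2\alpha})=0$. Hence $\mathcal{PBF}$ is the solution set of a homogeneous linear system, so it is a $\gf_2$-subspace. For $\mathcal{PF}$, the assignment $R\mapsto D_R$ with $D_R(x)=\tr(R(x+1)+R(x))$ is $\gf_2$-linear, and by Proposition~\ref{PFBrela} $R$ is preferred exactly when $D_R\in\mathcal{PBF}$; thus $\mathcal{PF}$ is the preimage of a subspace under a linear map, hence itself a subspace.

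For part (ii) I would normalize $f(0)=0$; condition~(i) then forces $f(1)=0$, so $f$ is determined by its $2^n-2$ values on $\gf_{2^n}\setminus\gf_2$, which index the columns of $M$, while the rows of $M$ record the relations coming from $L_1$ and $L_2$. The normalized PBFs are therefore exactly $\ker M$, of dimension $(2^n-2)-\rank(M)$. Since the constant function $1$ is a PBF (immediate from Definition~\ref{def_PFB}) taking value $1$ at $0$, dropping the normalization adds exactly one dimension, so $\dim\mathcal{PBF}=2^n-1-\rank(M)$. For $\mathcal{PF}$ I would apply rank--nullity to $R\mapsto D_R$: its kernel is the space of trivial preferred functions, of dimension $n\cdot2^n-2^{n-1}$ by Proposition~\ref{PFBrela}, and its image on $\mathcal{PF}$ is $\mathcal{PBF}$; summing the two dimensions gives $n\cdot2^n+2^{n-1}-1-\rank(M)$.

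The crux of part (iii) is to bound $\rank(M)$ by the number of its rows $|L_1|+|L_2|$ and to evaluate this exactly. Plainly $|L_1|=2^{n-1}-1$. For $|L_2|$, I would let the order-$6$ dihedral group $\langle\,\alpha\mapsto\omega\alpha,\ \alpha\mapsto\alpha^{-1}\,\rangle$ act on $\gf_{2^n}\setminus\gf_4$ and verify that it acts \emph{freely}: every coincidence among $\alpha,\omega\alpha,\omega^2\alpha,\alpha^{-1},\omega\alpha^{-1},\omega^2\alpha^{-1}$ forces $\alpha\in\gf_4$ (for instance $\alpha=\omega\alpha^{-1}$ gives $\alpha^2=\omega$, whence $\alpha=\omega^2\in\gf_4$). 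Because two elements give the same triple precisely when they share an orbit, $|L_2|$ is the number of orbits, namely $(2^n-4)/6$ (an integer as $n$ is even). Hence $|L_1|+|L_2|=(2^{n+1}-5)/3$, so $\rank(M)\le(2^{n+1}-5)/3$ and, by part~(ii), $\dim\mathcal{PBF}\ge 2^n-1-\tfrac{2^{n+1}-5}{3}=\tfrac{2^n+2}{3}$. This yields at least $2^{(2^n+2)/3}$ PBFs, and since $f\mapsto G(x)=I(x)+f(I(x))$ is injective ($I$ being a bijection), at least that many distinct differentially $4$-uniform permutations from Corollary~\ref{th_general}. To pass to CCZ-classes I would bound how many of these permutations can lie in one class: every function CCZ-equivalent to a fixed one is obtained by applying an affine permutation of $\gf_2^{2n}$ to its graph, so a class meets our family in at most $|\mathrm{AGL}(2n,\gf_2)|=2^{2n}\,|\mathrm{GL}(2n,\gf_2)|<2^{4n^2+2n}$ functions; dividing gives at least $2^{(2^n+2)/3-4n^2-2n}$ inequivalent ones.

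I expect the single delicate point to be the exact value of $|L_2|$, since the whole exponent rests on the \emph{freeness} of the group action: any orbit of size less than $6$ would inflate $|L_2|$, raise the bound on $\rank(M)$, and weaken the final estimate. Confirming that none of the six potential coincidences can occur outside $\gf_4$ is thus the heart of the argument; by contrast, the CCZ count is a soft pigeonhole once the crude bound $|\mathrm{AGL}(2n,\gf_2)|<2^{4n^2+2n}$ is noted.
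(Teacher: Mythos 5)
Your proposal is correct and takes essentially the same approach as the paper: normalize $f(0)=0$, identify the normalized PBFs with $\ker M$, add one dimension for the constant function $1$, obtain $\dim\mathcal{PF}$ from Proposition~\ref{PFBrela} via rank--nullity, bound $\rank(M)$ by the row count $|L_1|+|L_2|=\frac{2^{n+1}-5}{3}$, and divide the resulting $2^{\frac{2^n+2}{3}}$ permutations by a per-CCZ-class bound of $2^{4n^2+2n}$. The only cosmetic differences are that you justify $|L_2|=\frac{2^n-4}{6}$ via freeness of the order-$6$ dihedral action (the paper simply checks that the six elements $\omega^i\alpha^{\pm 1}$ are distinct and yield the same triple), and that you bound a CCZ class by $|\mathrm{AGL}(2n,\gf_2)|<2^{4n^2+2n}$ where the paper counts coefficient choices for the pair of affine maps $(L_1,L_2)$ --- both give the same exponent.
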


\begin{proof}
        (i) It is clear from the definitions  that $\mathcal{PBF}$ and $\mathcal{PF}$ are $\gf_2$-subspaces.

	(ii) First note that determining $f$ is equivalent to determining all the images $f(x)$ for $x\in\gf_{2^n}$.
	Now let $f$ be a PBF (recall that we assume $f(0)=f(1)=0$, and therefore we only need to determine the images of
	$f$ on $\gf_{2^n}\setminus\gf_2$). By the two conditions in Theorem \ref{ChPBF}, clearly
	we may obtain all such PBFs by solving linear equations as the following.

	It is not difficult to see that $|L_1|=2^{n-1}-1$. Note that when $\alpha\in \gf_{2^n}\setminus\gf_4$,
	the six elements $\alpha, \omega\alpha, \omega^2\alpha , \frac{1}{\alpha},
	\frac{1}{\omega\alpha}, \frac{1}{\omega^2\alpha}$ are distinct, and any one
	leads to the same element of  $L_2$.
	Hence $|L_2|=\frac{2^n-4}{3\cdot 2}=\frac{2^{n-1}-2}{3}$.
	It follows from Theorem \ref{ChPBF} that  a Boolean function
	$f$ is a PBF with $f(0)=0$ if and only if
	$$ M f^{\rm{T}}=0. $$
        Note that in the above equation, by abuse of notation,
        we still use $f^{\rm{T}}$ to denote the value vector of $f$ on $\gf_{2^n}\setminus \gf_2$.
        Therefore the dimension of the set of all PBFs with $f(0)=0$ is  $2^n-2-\rank(M)$.
	
	It is clear that $f+1$ is also a PBF if $f$ is a PBF, hence altogether the
	dimension of $\mathcal{PBF}$ is $2^n-2-\rank(M)+1=2^n-1-\rank(M)$.
	The  dimension of $\mathcal{PF}$ then
        follows from that of $\mathcal{PBF}$ and Proposition \ref{PFBrela}.
	
	(iii) On the one hand, we have
	\begin{eqnarray*}
		\rank(M) & \leq & \min \{|L_1|+|L_2|, 2^n-2\}  \\
		         &=&  \min\{ \frac{2^{n+1}-5}{3}, 2^n-2 \} =   \frac{2^{n+1}-5}{3}.
	\end{eqnarray*}
	Hence, the dimension of $\mathcal{PBF}$, which is one plus the dimension of the null space of $M$, is at least
	$2^n-2-\frac{2^{n+1}-5}{3}+1=\frac{2^{n}+2}{3}$.
	It then follows from Corollary \ref{th_general} that we can obtain at
	least $2^{\frac{2^{n}+2}{3}}$ differentially $4$-uniform permutations on $\gf_{2^n}$.
	
	On the other hand, for any $(n,n)$-function,
	there are at most
        $(2^n)^{4n+2}=2^{4n^2+2n}$ functions which are CCZ-equivalent to it.
        Indeed, given two $(n,n)$-functions $F$ and $G$. If $G$ is CCZ-equivalent to $F$,
	then by definition there exists an affine automorphism $L$ of $\gf_{2^n}\times\gf_{2^n}$
	such that
	\begin{equation}
	  \label{CCZeq1}
	  L\Big(\{(x, F(x)):x\in\gf_{2^n}\}\Big)=\Big\{(y, G(y)):y\in\gf_{2^n}\Big\}.
	\end{equation}
	Clearly we may write $L=(L_1,L_2)$, where $L_i$ is an affine function from
	$\gf_{2^n}\times\gf_{2^n}$ to $\gf_{2^n}$. Define the functions
	$\mathcal{L}_i(x)=L_i(x, F(x))$ for $i=1,2$, we may rewrite
	the equation (\ref{CCZeq1}) as
	\begin{eqnarray*}
	   \Big\{(y, G(y)):y\in\gf_{2^n}\Big\}
	   &=& L\Big(\{(x, F(x)):x\in\gf_{2^n}\}\Big) \\
	   &=& \Big\{(L_1(x,F(x)), L_2(x, F(x))) : x\in\gf_{2^n}\Big\} \\
	   &=& \Big\{(\mathcal{L}_1(x), \mathcal{L}_2(x)) : x\in\gf_{2^n}\Big\} \\
	   &=& \Big\{(x, \mathcal{L}_2(\mathcal{L}_1^{-1}(x)) : x\in\gf_{2^n}\Big\}.
        \end{eqnarray*}
	which implies that $G=\mathcal{L}_2(\mathcal{L}_1^{-1}(x))$, and then we can see that
	once $\mathcal{L}_1$ and  $\mathcal{L}_2$ are determined, $G$ is determined.
	Clearly an affine mapping $\varphi:\gf_{2^n}\times\gf_{2^n}\rightarrow\gf_{2^n}$
	can be represented as the form
	$$ \varphi(x,y)=\sum_{i=0}^{n-1}a_ix^{2^i}+\sum_{i=0}^{n-1}b_iy^{2^i}+c, $$
	where $a_i, b_i, c\in \gf_{2^n}$.
        Therefore, there are at most $(2^n)^{2n+1}$ choices of $L_1$ and $L_2$ respectively, and then altogether
	there are at most $(2^n)^{4n+2}$ functions $G$ which are CCZ-equivalent to $F$.

	Finally, it follows
	that the number of CCZ-inequivalent differentially $4$-uniform permutations
	on $\gf_{2^n}$ is at least
	$$
	\frac{2^{\frac{2^{n}+2}{3}}}{2^{n(4n+2)}}=2^{\frac{2^{n}+2}{3}-4n^2-2n}.
	$$
        We complete the proof.
$\hfill\Box$\end{proof}

Several remarks on the above theorem are in the sequel.

\begin{remark}
\label{remark_prob}
(i) By MAGMA, when $6\leq n\leq 14$, we compute the rank of the matrix $M$ defined in (\ref{M})
    and found that they are all equal to $\frac{2^{n+1}-5}{3}$. We cannot prove this and leave it
    to interested readers, and we will present some possible methods to solve this problem
    in the next subsection.

(ii) In the proof of Theorem \ref{main1}(iii), the bound on the number of the
     functions $G$ which are CCZ-equivalent
     to a given function $F$ is  loose as we do not consider the requirement that
     $L$ is a permutation. {An improvement on} such number would yield a better lower bound of the number of
     CCZ-inequivalent differentially $4$-uniform permutations. We propose it as an open problem.

(iii) Theorem \ref{main1}(iii) shows that the number of
      CCZ-inequivalent differentially $4$-uniform permutations over $\gf_{2^n}$ ($n$ even)
      grows double exponentially when $n$ grows.
      This has been observed in \cite{QTTL} and proposed as an open problem.
      To the best of our knowledge,
      this is the first bound on the number of CCZ-inequivalent classes of
      differentially $4$-uniform permutations on $\gf_{2^n}$ with $n$ even.
\end{remark}

{According to Remark \ref{remark_prob} (i) and (ii), we express the following two problems which
remain to be solved.}
\begin{Prob}\label{prob_M}
	{To prove that the rank of the matrix $M$ defined in (\ref{M}) is
              $\frac{2^{n+1}-5}{3}$.}
\end{Prob}

\begin{Prob}\label{prob_CCZNum}
	Given an $(n,n)$-function (or a permutation) $F$, determine the number, or
	an upper bound of the number of the
	functions $G$ which are CCZ-equivalent to {$F$}.
\end{Prob}

In the end of this subsection, we {give some comments} on
the nonlinearity of the differentially $4$-uniform permutations over $\gf_{2^n}$
constructed by Corollary \ref{th_general}.
Clearly, the lower bound of the nonlinearity in Result \ref{nl1} holds for all these functions. However,
as mentioned in \cite[page 11]{QTTL}, this bound is not tight. To show this, on small fields,
we randomly choose some differentially $4$-uniform permutations
constructed in Corollary \ref{th_general} and compute their nonlinearity.
Thanks to the characterization of PBF by solving linear equation system,
we can perform such a random search.
The computational results are given in the following table. The sample size means how many
differentially $4$-uniform permutations we choose, Ave(NL) denotes the average nonlinearity
of them, and KMNL denotes the known maximal value of nonlinearity $2^{n-1}-2^{n/2}$.
Var(NL) denotes the variance of the nonlinearity defined by
$\mbox{Val(NL)}=\sqrt{\sum\limits_{\mbox{nl}\in\mathcal{SP}}
(\mbox{nl}-\mbox{Ave(NL)})^2/|\mathcal{SP}|}$, where
$\mathcal{SP}$ is the chosen sample space. Finally, $\mbox{Dist}({\mbox{NL}})$ denotes the
distribution of the nonlinearity for functions in the sample space,
where $a^b$ means that there are $b$ sample
functions with nonlinearity $a$.

\begin{center}
   \tabcaption{Nonlinearity of the differentially $4$-uniform permutations
	   constructed by Corollary \ref{th_general} \newline
	   on $\gf_{2^n}$ when $6\leq n\leq 10$ ($n$ even)}
       \begin{tabular}{|c|c|c|c|l|c|c|} \hline
       \label{res23}
       $n$    & Sample size   & Ave(NL)   & Var(NL) & Dist(NL) & $\begin{array}{ll}
                                                        & \mbox{Bound in}\\
							& \mbox{Result \ref{nl1}}
						      \end{array}$   & $\begin{array}{ll}
						                         & \mbox{KMNL}
									\end{array}$\\ \hline \hline
	$6$    & $10,000$      & $18.4022$  &$1.2034$ &$\begin{array}{ll}
	                                         &14^{48}, 16^{849}, 18^{6161}\\
						 &20^{2928}, 22^{14}
						\end{array}$ & $14$  & $24$     \\ \hline
	$8$    & $10,000$      & $94.2740$  &$2.2576$ &$\begin{array}{ll}
	                                      &82^{10}, 84^{30}, 86^{30}, 88^{150} \\
					      &90^{540}, 92^{1620}, 94^{3450} \\
					      &96^{3490}, 98^{680} \\
					      \end{array}$
	& $55$  & $112$           \\ \hline
	$10$   & $5,000$       & $434.2524$  &$3.7225$ & $\begin{array}{ll}
	                                                  &418^4, 420^{16}, 422^5,424^{35} \\
							  &426^{132}, 428^{263}, 430^{470} \\
							  &432^{730}, 434^{1053}, 436^{1022} \\
							  &438^{910},440^{315}, 442^{45} \\
	                                                  \end{array}$ & $239$  & $480$      \\ \hline
 \end{tabular}
\end{center}

{Two remarks on Table \ref{res23} are as follows:
\begin{enumerate}[(1).]
	\item When $n=6$, by Theorem \ref{main1}, there are $2^{2^6-1-\rank(M)}$
	PBFs. By MAGMA, we compute that $\rank(M)=41$ and therefore altogether
	we obtain $2^{22}$ PBFs.
	We exhaustively search these $2^{22}$ PBFs and find that
	all associated differentially 4-uniform permutations with the known maximal
	nonlinearity (i.e. $2^5-2^3=24$) are CCZ-equivalent to the inverse function.
	The second highest nonlinearity of the functions can achieve is $22$.
	This is not suprising as such functions can be obtained via PBF of weight $2$
	(see \cite[Corollary 1]{yu-4uni} for instance).
        \item When $n=8,10$, by Theorem \ref{main1} and MAGMA, there are $2^{86}$ and $2^{342}$ PBFs.
	It is impossible for an exhaustive search to explore the nonlinearity
	property of the functions in Theorem \ref{main1}.
	Hence we do a random search. 
	The computational results in Table 2 suggest that on average the nonlinearity of the
	differentially $4$-uniform permutations 
	constructed by Theorem \ref{main1} approximates to the asympotic nonlinearity 
	of random $(n,n)$ functions, which is
	in the neighborhood of $2^{n-1}-2^{\frac{n}{2}}\sqrt{n \ln2}$ (\cite[Theorem 1]{Dib}). 
	There may exist differentially 4-uniform permutations with
	known maximal nonlinearity  but not CCZ-equivalent to the inverse function.
	However, the density of them is too small and we do not know how to find them efficiently.
	We leave the problem that whether there exist differentially 4-uniform permutations
	with known maximal nonlinearity for functions in Theorem \ref{main1} to interested
	readers. {To solve this problem, in addition to the experiment performed
	as in Table 2, we did the following test. When $n=8$, we exhaust all PBFs with
	Hamming weights $2,4,6$ (the reason we pick up low Hamming weight PBFs is that
	they have a higher chance to give rise to differentially 4-uniform permutations
	with known maximal nonlinearity) and compute their nonlinearity. Unfortunately
	we do not find a desirable example.}
\end{enumerate}

\subsection{More on the Rank of $M$}

In this subsection, we present a possible method to determine the rank of $M$ using graph theory.
For convenience, we first introduce some notations.
For any $\alpha\in \gf_{2^n}\setminus\gf_4$, we call the set
$A_{\alpha}=\{\alpha+\frac{1}{\alpha}, \omega\alpha+\frac{1}{\omega\alpha},
\omega^2\alpha+\frac{1}{\omega^2\alpha}\}$
a \textit{triple set} with respect to $\alpha$ (or TS for short).
Define the set $\mathcal{TS}=\{ A_{\alpha}|\alpha\in \gf_{2^n}\setminus\gf_4\}$.
It is known from the last subsection that $|\mathcal{TS}|=\frac{2^{n-1}-2}{3}$.

We first present some properties of triple sets defined above.

\begin{Lemma}
	\label{SUnion}
	The set $\mathcal{TS}$ is a partition of the set
	$S=\{ x\in \gf_{2^n}\setminus\gf_2 | \tr\left(1/x\right)=0\}$.
\end{Lemma}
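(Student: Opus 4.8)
The plan is to establish the three defining features of a partition in turn: that every triple set is contained in $S$, that the triple sets jointly cover $S$, and that two triple sets meeting in a point must coincide. The one device behind all three steps is the two-to-one map $\phi(\alpha)=\alpha+\frac{1}{\alpha}$ on $\gf_{2^n}^*$, together with the solvability criterion for the quadratic $t^2+\beta t+1=0$: this equation has a root in $\gf_{2^n}$ precisely when $\tr(1/\beta)=0$ (here one uses that $\tr$ is invariant under squaring, so $\tr(1/\beta^2)=\tr(1/\beta)$), and when it does, its two roots are exactly some $\alpha$ and $1/\alpha$ with $\phi(\alpha)=\beta$. This is the same machinery already invoked in the proof of Theorem \ref{ChPBF} via \cite[Lemma 4.1]{lachaud}, so I would simply reuse it.

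For containment, I fix $\alpha\in\gf_{2^n}\setminus\gf_4$ and consider a typical entry $\phi(\omega^i\alpha)$ of $A_\alpha$. Since $\omega\in\gf_4^*$ and $\gf_4$ is a field, $\omega^i\alpha\notin\gf_4$ as well; in particular $\omega^i\alpha\neq 1$ forces $\phi(\omega^i\alpha)\neq 0$, and $\omega^i\alpha\notin\{\omega,\omega^2\}$ forces $\phi(\omega^i\alpha)\neq 1$ (as $\phi(\gamma)=1$ is equivalent to $\gamma^2+\gamma+1=0$). Because $\phi(\omega^i\alpha)$ lies in the image of $\phi$, the value $\tr\big(1/\phi(\omega^i\alpha)\big)=0$ is automatic from the solvability criterion. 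Hence each entry lies in $\gf_{2^n}\setminus\gf_2$ with zero trace of its inverse, i.e. in $S$, so $A_\alpha\subseteq S$. For covering, I take any $x\in S$; then $\tr(1/x)=0$ yields a root $\alpha\in\gf_{2^n}^*$ of $t^2+xt+1$, so $\phi(\alpha)=x$. The conditions $x\neq 0,1$ rule out $\alpha\in\gf_4$ exactly as above, whence $x=\phi(\alpha)\in A_\alpha$. Thus $S\subseteq\bigcup_\alpha A_\alpha$, and combined with containment this gives $\bigcup_\alpha A_\alpha=S$.

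The remaining and most delicate point is disjointness of distinct triple sets, and this is where I expect the main care to be needed. Suppose $A_\alpha$ and $A_{\alpha'}$ share an element, say $\phi(\omega^i\alpha)=\phi(\omega^j\alpha')$. By the two-to-one description of $\phi$, $\omega^j\alpha'\in\{\omega^i\alpha,\ 1/(\omega^i\alpha)\}$, so $\alpha'$ is either an $\omega$-power multiple of $\alpha$ or of $1/\alpha$. I would then check the two invariances $A_{\omega\alpha}=A_\alpha$ and $A_{1/\alpha}=A_\alpha$: the first is the cyclic relabelling of $\{\alpha,\omega\alpha,\omega^2\alpha\}$, and the second follows from the identities $\phi(1/\alpha)=\phi(\alpha)$, $\phi(\omega/\alpha)=\phi(\omega^2\alpha)$ and $\phi(\omega^2/\alpha)=\phi(\omega\alpha)$ (using $1/\omega=\omega^2$). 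Either way $A_{\alpha'}=A_\alpha$, so the triple sets are pairwise disjoint and we have a partition; this is precisely the observation, already recorded just before the lemma, that the six elements $\omega^i\alpha,\ \omega^i/\alpha$ all produce the same block of $L_2$. As an independent sanity check one could instead finish by counting: each $A_\alpha$ has three distinct entries, $|\mathcal{TS}|=\frac{2^{n-1}-2}{3}$, and a direct trace count gives $|S|=2^{n-1}-2=3\,|\mathcal{TS}|$, so the covering alone then forces disjointness. The only genuine subtlety throughout is tracking the $\gf_4$-exceptional cases so that no entry collapses into $\gf_2$ and no block degenerates below three elements; everything else is a direct application of the quadratic solvability criterion.
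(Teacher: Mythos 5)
Your proof is correct and follows essentially the same route as the paper's: show $\bigcup_\alpha A_\alpha \subseteq S$, show covering via the trace criterion guaranteeing $x=\alpha+\frac{1}{\alpha}$ for some $\alpha\notin\gf_4$, and deduce disjointness from the two-to-one nature of $\alpha\mapsto \alpha+\frac{1}{\alpha}$. The only difference is one of detail: the paper's proof says "clearly" for containment and hides the invariances $A_{\omega\alpha}=A_\alpha$, $A_{1/\alpha}=A_\alpha$ behind a "without loss of generality," whereas you spell these out explicitly (and add an optional counting check), which makes your write-up a fleshed-out version of the same argument.
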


\begin{proof}
	Clearly $\bigcup_{\alpha\in \gf_{2^n}\setminus\gf_4} A_{\alpha}\subseteq S$.
	Conversely, for any $y\in S$, there exists $\alpha\in \gf_{2^n}\setminus\gf_4$
	such that $y=\alpha+\frac{1}{\alpha}$.
	It follows that $y\in A_{\alpha}$. Therefore,
	$S \subseteq \bigcup_{\alpha\in \gf_{2^n}\setminus\gf_4} A_{\alpha}$.
	So $S=\bigcup_{\alpha\in \gf_{2^n}\setminus\gf_4} A_{\alpha}$.

	Furthermore, assume that $a\in A_{\alpha} \cap A_{\beta}$, where $A_{\alpha}, A_{\beta}\in\mathcal{TS}$.
	Without loss of generality, assume that $a=\alpha+\frac{1}{\alpha}=\beta+\frac{1}{\beta}$.
	We have either $\alpha=\frac{1}{\beta}$ or $\alpha=\beta$
	{as $x+1/x$ is a 2-to-1 mapping from $\gf_{2^n}\setminus\gf_4$ to $S$}. It then follows that
	in {each} case, we have $A_{\alpha}=A_{\beta}$. The proof is finished.
$\hfill\Box$\end{proof}

\begin{Lemma}
	\label{TS}
	Let $A=\{a_1, a_2, a_3\}$ {be a triple set}. Then the following results hold:
	\begin{enumerate}[(i)]
	  \item $a_1 + a_2 + a_3 = 0$;
	  \item $1+a_i\notin A$ for any $1\leq i\leq 3$.
	\end{enumerate}
\end{Lemma}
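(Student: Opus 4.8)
The plan is to verify both assertions by direct computation with the defining parametrization of the triple set: relabel the elements of $A=A_\alpha$ as $a_m=\omega^m\alpha+(\omega^m\alpha)^{-1}$ for $m\in\{0,1,2\}$, where $\alpha\in\gf_{2^n}\setminus\gf_4$, and exploit throughout the single relation $1+\omega+\omega^2=0$ together with $\omega^{-1}=\omega^2$.

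For part (i), I would simply group the summands as $a_0+a_1+a_2=\alpha(1+\omega+\omega^2)+\alpha^{-1}(1+\omega^{-1}+\omega^{-2})$. The first bracket is $0$ by the choice of $\omega$, and the second equals $1+\omega^2+\omega=0$ as well; hence the sum is $0$. (This also re-derives the remark made just after Theorem \ref{ChPBF}.)

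Part (ii) is the substantive part, and I would argue by contradiction. Suppose $1+a_i\in A$, say $1+a_i=a_j$. If $i=j$ this forces $1=0$, so assume $i\neq j$. Writing $g(x)=x+x^{-1}$ and $u=\omega^i\alpha$, $v=\omega^j\alpha$, we have $a_i=g(u)$, $a_j=g(v)$ and $v=tu$ with $t=\omega^{j-i}$; since $i\neq j$, $t\in\{\omega,\omega^2\}$, so in particular $t\neq 1$ and $t^3=1$. The relation to exploit is then $g(u)+g(v)=1$. The key identity I would use is
\[
  g(u)+g(v)=(u+v)\,\frac{uv+1}{uv},
\]
which, after substituting $v=tu$ and clearing denominators, turns $g(u)+g(v)=1$ into $(t+t^2)u^2+tu+(1+t)=0$. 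The crucial simplification is that $t+t^2=1$ for both $t=\omega$ and $t=\omega^2$, so the quadratic collapses to $u^2+tu+(1+t)=0$, which factors as $(u+1)(u+1+t)=0$. Thus $u\in\{1,\,1+t\}$; but $1+t\in\{1+\omega,1+\omega^2\}=\{\omega^2,\omega\}$, so in either case $u\in\gf_4$. Since $u=\omega^i\alpha$ and $\omega\in\gf_4$, this yields $\alpha\in\gf_4$, contradicting that $A_\alpha$ is a triple set. Hence $1+a_i\notin A$.

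I expect the only delicate point to be the algebraic reduction in part (ii): correctly clearing denominators and then recognizing that $t+t^2=1$ is exactly what makes the leading coefficient $1$ and forces both roots into $\gf_4$. It is worth noting that no separate treatment of the $2$-to-$1$ behaviour of $g$ (used in Lemma \ref{SUnion}) is required here, because we manipulate the relation $1+g(u)=g(v)$ directly rather than asking which preimage produces a prescribed value.
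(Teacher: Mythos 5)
Your proof is correct, but part (ii) takes a genuinely different route from the paper's. The paper's entire proof is two lines: (i) follows directly from the definition (exactly your grouping $\alpha(1+\omega+\omega^2)+\alpha^{-1}(1+\omega^{-1}+\omega^{-2})=0$), and (ii) ``can be deduced from (i) and the fact that $1\notin A$''. Spelled out, that deduction is: if $1+a_i=a_j$ with $j\neq i$, then by (i) the third element is $a_k=a_i+a_j=1$, so $1\in A$; but every triple set is contained in $S=\{x\in \gf_{2^n}\setminus\gf_2 \mid \tr(1/x)=0\}$ by Lemma \ref{SUnion}, and $1\notin S$. Your argument for (ii) never invokes (i) or the containment $A\subseteq S$: you manipulate $g(u)+g(tu)=1$ directly, collapse it via $t+t^2=1$ to $u^2+tu+(1+t)=(u+1)(u+1+t)=0$, and conclude $u\in\{1,1+t\}\subseteq\gf_4$, contradicting $\alpha\notin\gf_4$. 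The two proofs rest on the same underlying fact --- your roots $u=1$ and $u=1+t$ are precisely the parameter values that would force $0$ and $1$ into $A$ --- but the paper's deduction is structural and essentially free once Lemma \ref{SUnion} and part (i) are in hand, whereas yours is computational and self-contained: it needs neither Lemma \ref{SUnion} nor the remark after Theorem \ref{ChPBF} that the three elements are distinct and nonzero, and it makes explicit exactly where the hypothesis $\alpha\notin\gf_4$ is used. Both are valid; the paper's is shorter, yours is more portable.
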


\begin{proof}
	(i) follows directly from the definition of the triple set.
	(ii) can be deduced from (i) and the fact that $1\notin A$.
$\hfill\Box$\end{proof}

We define the following sets for later usage:
\begin{eqnarray}\label{eqT}
	T_1 &=& \{x\in \gf_{2^n}| \tr\left(\frac{1}{x}\right)=\tr\left(\frac{1}{x+1}\right)=1\},\nonumber  \\
	T_2 &=& \{x\in \gf_{2^n}| \tr\left(\frac{1}{x}\right)+\tr\left(\frac{1}{x+1}\right)=1\},  \\
	T_3 &=& \{x\in \gf_{2^n}| \tr\left(\frac{1}{x}\right)=\tr\left(\frac{1}{x+1}\right)=0\}.\nonumber
\end{eqnarray}

{The following lemma determines the size of the sets $T_1,T_2,T_3$.
We will use this result in the next section.
\begin{lemma}
Let $n$ be an even positive integer and the sets $T_i$ are defined in (\ref{eqT}) for $i=1,2,3$.
Then
\label{sizeTi}
\begin{eqnarray*}
|T_1| &=& |T_3| = \frac{1}{4}\left( 2^n + 1 - 2^{n/2+1}\cos\left(n\arccos(1/\sqrt{8})\right)\right), \ \mbox{and}  \\
|T_2| &=& \frac{1}{2}\left(2^n - 1 + 2^{n/2+1}\cos\left(n\arccos(1/\sqrt{8})\right)\right).
\end{eqnarray*}
\end{lemma}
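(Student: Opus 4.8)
The plan is to compute the joint distribution of the pair $\big(\tr(1/x),\tr(1/(x+1))\big)$ as $x$ ranges over $\gf_{2^n}$ and to reduce all three cardinalities to a single exponential sum, which will turn out to be a binary Kloosterman sum.

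First I would set, for $i,j\in\{0,1\}$, $N_{ij}=|\{x\in\gf_{2^n} : \tr(1/x)=i,\ \tr(1/(x+1))=j\}|$, so that $|T_3|=N_{00}$, $|T_1|=N_{11}$, and $|T_2|=N_{01}+N_{10}$. Using the indicator $\tfrac12(1+(-1)^{\tr(a)})$ for $\tr(a)=0$ and expanding the products, each $N_{ij}$ becomes a combination of the constant sum $2^n$, the two linear sums $\sum_x(-1)^{\tr(1/x)}$ and $\sum_x(-1)^{\tr(1/(x+1))}$, and $S:=\sum_x(-1)^{\tr(1/x+1/(x+1))}$. The two linear sums vanish: substituting $y=1/x$ (with $1/0=0$) gives $\sum_x(-1)^{\tr(1/x)}=1+\sum_{y\ne0}(-1)^{\tr(y)}=1+(-1)=0$, and similarly after the shift $x\mapsto x+1$. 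This at once yields $N_{00}=N_{11}=\tfrac14(2^n+S)$ and $N_{01}=N_{10}=\tfrac14(2^n-S)$, hence $|T_1|=|T_3|=\tfrac14(2^n+S)$ and $|T_2|=\tfrac12(2^n-S)$. In particular the equality $|T_1|=|T_3|$ falls out for free, and everything reduces to evaluating $S$.

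Next I would simplify $1/x+1/(x+1)=1/(x^2+x)$ and push the sum through the $\gf_2$-linear map $x\mapsto x^2+x$. This map has kernel $\gf_2$ and image exactly the trace-zero hyperplane $\{u:\tr(u)=0\}$, so it is $2$-to-$1$ onto that hyperplane; the two points $x\in\{0,1\}$ contribute $2$ (using $1/0=0$), and the remaining terms give $S=2+2\sum_{u\ne0,\ \tr(u)=0}(-1)^{\tr(1/u)}$. Inserting the indicator $\tfrac12(1+(-1)^{\tr(u)})$ once more and using $\sum_{u\ne0}(-1)^{\tr(1/u)}=-1$, I obtain $S=1+\kappa_n$, where $\kappa_n=\sum_{x\in\gf_{2^n}^*}(-1)^{\tr(x+1/x)}$ is the binary Kloosterman sum at the point $1$. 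Finally I would invoke the classical evaluation of $\kappa_n$: by the Davenport--Hasse relation (equivalently, the rationality of the associated zeta function), $\kappa_n=-(\lambda^n+\bar\lambda^n)$ for a fixed complex-conjugate pair with $\lambda\bar\lambda=2$, so $|\lambda|=\sqrt2$; computing the seeds $\kappa_1=1$ over $\gf_2$ and $\kappa_2=3$ over $\gf_4$ directly then forces $\lambda+\bar\lambda=-1$, i.e. $\lambda,\bar\lambda$ are the roots of $T^2+T+2=0$. Writing $\lambda=\sqrt2\,e^{i\theta}$ gives $\cos\theta=-1/\sqrt8$, whence $\kappa_n=-2^{n/2+1}\cos(n\theta)$; since $\theta=\pi-\arccos(1/\sqrt8)$ and $n$ is even, $\cos(n\theta)=(-1)^n\cos\!\big(n\arccos(1/\sqrt8)\big)=\cos\!\big(n\arccos(1/\sqrt8)\big)$. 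Substituting $S=1+\kappa_n$ into the formulas of the first step produces exactly the claimed values of $|T_1|,|T_2|,|T_3|$.

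The character-sum bookkeeping and the $2$-to-$1$ substitution are routine; the genuine content, and the main obstacle, is the closed-form evaluation of $\kappa_n$, for which I rely on the standard theory of Kloosterman sums rather than re-deriving it. I would also stay careful about the convention $1/0=0$ at the excluded points $x=0,1$ and about the parity of $n$, which is precisely what turns the naturally occurring $\arccos(-1/\sqrt8)$ into the $\arccos(1/\sqrt8)$ recorded in the statement.
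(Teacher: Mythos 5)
Your proof is correct, but it follows a genuinely different route from the one in the paper. The paper disposes of the lemma in a few lines: since $n$ is even, the substitutions $y=1/x$ and then $z=y+1$ carry the defining conditions $\tr(1/x)$, $\tr(1/(x+1))$ into $\tr(z)$, $\tr(1/z)$, so that $T_1,T_2,T_3$ are mapped bijectively onto the sets $\{z:\tr(z)=\tr(1/z)=1\}$, $\{z:\tr(z)+\tr(1/z)=1\}$, $\{z:\tr(z)=\tr(1/z)=0\}$, whose cardinalities are then simply quoted from Hirschfeld's book together with the relation $\sum_{i=1}^3|T_i|=2^n$. You instead stay in the original variables, expand the indicators into characters, kill the two linear sums by the substitution $y=1/x$, and reduce everything to the single sum $S=\sum_x(-1)^{\tr(1/x+1/(x+1))}$, which the $2$-to-$1$ Artin--Schreier map $x\mapsto x^2+x$ converts into the binary Kloosterman sum $\kappa_n$ at $1$, finally evaluated via Davenport--Hasse from the seeds $\kappa_1=1$, $\kappa_2=3$. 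The two arguments rest on the same classical fact in different clothing --- Hirschfeld's count of elements with prescribed $(\tr(z),\tr(1/z))$ is precisely the Kloosterman evaluation attached to the roots of $T^2+T+2$ --- but your reduction is more self-contained and pins the external dependence down to exactly one classical theorem, and it yields $|T_1|=|T_3|$ for free from the vanishing of the linear character sums, whereas the paper's substitution trick is shorter and obtains the equality only as part of the quoted count. Your attention to the convention $1/0=0$ at the points $x\in\{0,1\}$, and to the parity of $n$ (which converts the naturally occurring $\arccos(-1/\sqrt{8})$ into the $\arccos(1/\sqrt{8})$ of the statement), is exactly the bookkeeping the argument needs.
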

\begin{proof}
Since $n$ is even, clearly $0\in T_3$ as $\tr(1)=0$.
Let $y=1/x$, we have $\tr(1/x)=\tr(y)$ and $\tr(1/(x+1))=\tr(y/(y+1))=\tr(1/(y+1))$.
Furthermore, let $z=y+1$, we have $\tr(y)=\tr(z)$ and $\tr(1/(y+1))=\tr(1/z)$.
Therefore we may rewrite the sets $T_i$ in (\ref{eqT}) as follows:
\begin{eqnarray}
     \label{neweqT}
	T_1 &=& \{ z\in \gf_{2^n}| \tr(z) = \tr(1/z) = 1\},\nonumber  \\
	T_2 &=& \{ z\in \gf_{2^n}| \tr(z)+ \tr(1/z)=1\},  \\
	T_3 &=& \{ z\in \gf_{2^n}| \tr(z) = \tr(1/z)=0\}.\nonumber
\end{eqnarray}
Now the results we want to show are followed from \cite[page 9]{Hirschfeld}
and the fact that $\sum_{i=1}^3|T_i|=2^n$.
We finish the proof.
$\hfill\Box$
\end{proof}

}

It is clear that $T_1, T_2$ and $T_3$ is a partition of $\gf_{2^n}$
and for any element $a$ of $S$ {(defined in Lemma \ref{SUnion})}, either $a\in T_2$ or $a\in T_3$.

\begin{definition}
	\label{adj}
	Let $A_1$ and $A_2$ be two {triple sets. They are called}
	\textit{adjacent} if there exist $a\in A_1$ and $b\in A_2$ such that $a+b=1$.
	{To be more clear}, we call $A_2$ is adjacent to $A_1$ at $a$, and
	call $A_1$ is adjacent to $A_2$ at $b$.
	{Two adjacent triple sets are also called \textit{neighbours} in the following.}
\end{definition}

It follows from Lemma \ref{TS} that any TS can not be a neighbour of itself.
It is also clear that any TS has at most three neighbours.

\begin{Lemma}
	\label{Neigh}
	Let $A$ be a triple set. Then $A$ has either three neighbours or exactly one neighbour.
\end{Lemma}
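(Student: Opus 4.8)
The plan is to translate the count of neighbours into a parity statement about three trace values, and then to establish that parity by an explicit algebraic cancellation. First I would reformulate adjacency in trace terms. Write $A=\{a_1,a_2,a_3\}$ with $a_i=\omega^{i-1}\alpha+(\omega^{i-1}\alpha)^{-1}$ for a suitable $\alpha\in\gf_{2^n}\setminus\gf_4$. By Definition \ref{adj}, $A$ acquires a neighbour through the element $a_i$ exactly when $1+a_i$ lies in some triple set, i.e. when $1+a_i\in S$ (Lemma \ref{SUnion}). Since $a_i\in S\subseteq\gf_{2^n}\setminus\gf_2$ forces $1+a_i\notin\gf_2$, this happens if and only if $\tr(1/(1+a_i))=0$, equivalently $a_i\in T_3$. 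Thus the number of elements $a_i$ producing a neighbour equals $|\{i:a_i\in T_3\}|$.

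Next I would verify that distinct producing elements give distinct neighbours, so that this count is the genuine number of neighbours. If the triple sets $B_i\ni 1+a_i$ and $B_j\ni 1+a_j$ coincided for some $i\neq j$, then by Lemma \ref{TS}(i) the third element of $B_i=B_j$ would be $(1+a_i)+(1+a_j)=a_i+a_j$, which, applying Lemma \ref{TS}(i) to $A$, equals the remaining element $a_k$. Hence $a_k\in A\cap B_i$, and the partition property of Lemma \ref{SUnion} forces $A=B_i$; but then $1+a_i\in B_i=A$, contradicting $1+a_i\notin A$ from Lemma \ref{TS}(ii). So distinct elements yield distinct neighbours, and each such neighbour is genuinely different from $A$.

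The crux is to prove the parity identity $\sum_{i=1}^3\tr\!\left(1/(1+a_i)\right)=0$, which I would deduce from the stronger field identity $\sum_{i=1}^3 1/(1+a_i)=0$ in $\gf_{2^n}$. Writing $1+a_i=(\omega^{i-1}\alpha^2+\alpha+\omega^{-(i-1)})/\alpha$ and factoring each numerator through the cube roots of unity (using $\alpha^2+\alpha+1=(\alpha+\omega)(\alpha+\omega^2)$ together with its two twists by $\omega,\omega^2$), the three fractions acquire the common denominator $(\alpha+1)(\alpha+\omega)(\alpha+\omega^2)=\alpha^3+1$; the resulting numerator collapses to $0$ upon invoking $1+\omega+\omega^2=0$. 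This cancellation is the \emph{main obstacle}, though it is a finite computation with no case analysis.

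Finally I would assemble the pieces. Three elements of $\gf_2$ summing to $0$ contain an even number of ones, hence an odd number of zeros; since not all three trace values can equal $1$, at least one equals $0$, so $|\{i:a_i\in T_3\}|\in\{1,3\}$. Combined with the distinctness step, this shows that $A$ has either exactly one neighbour or exactly three neighbours, as claimed.
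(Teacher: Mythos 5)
Your proposal is correct and follows essentially the same route as the paper: both reduce the neighbour count to the parity of the three trace values $\tr\left(1/(1+a_i)\right)$ and establish that they sum to zero by an explicit computation from $1+\omega+\omega^2=0$ (your common-denominator collapse over $\alpha^3+1$ is a minor variant of the paper's partial-fraction pairing, whose intermediate steps are likewise field identities, so the two computations prove the same cancellation). The only genuine addition is your second step, checking that distinct elements $a_i$ produce distinct neighbouring triple sets and that none of them is $A$ itself; the paper takes this for granted, so this is a worthwhile gap-fill rather than a different approach.
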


\begin{proof}
	Assume that
	$A=\{\alpha+\frac{1}{\alpha}, \omega\alpha+\frac{1}{\omega\alpha},
	\omega^2\alpha+\frac{1}{\omega^2\alpha}\}$
	for some $\alpha\in\gf_{2^n}\setminus\gf_4$.
	For any $0\leq i\leq 2$,
	there exists a neighbour of $A$ {which is adjacent} at $\omega^i\alpha+\frac{1}{\omega^i\alpha}$
	if and only if $$\tr\left(\frac{1}{\omega^i\alpha+\frac{1}{\omega^i\alpha}+1}\right)=0.$$
	Further, {by using the property that $1+\omega+\omega^2=0$ and $\omega^3=1$}, we have
	{
	\begin{equation}
	\label{traceeq}
	\begin{array}{lll}
		\tr\Big(\frac{1}{\omega^i\alpha+\frac{1}{\omega^i\alpha}+1}\Big)
		  &=& \tr\left(\frac{\omega^i\alpha}{(\omega^i\alpha)^2+(\omega^i\alpha)+1}\right) \\[2ex]
		  &=& \tr\left(\frac{\omega^i\alpha}{(\omega^i\alpha)^2+(\omega^i\alpha)(\omega+\omega^2)+\omega\cdot\omega^2}\right) \\[2ex]
		  &=& \tr\left(\frac{\omega^i\alpha}{(\omega^i\alpha +\omega )(\omega^i\alpha+\omega^2)}\right)  \\[2ex]
                  &=& \tr\left(\frac{\omega^i\alpha}{\omega^i\alpha +\omega }\right) +
			\tr\left(\frac{\omega^i\alpha}{\omega^i\alpha +\omega^2 }\right) \\[2ex]
                  &=& \tr\left(\frac{1}{1+\omega^{1-i}\alpha^{-1}}\right) + \tr\left(\frac{1}{1+\omega^{2-i}\alpha^{-1}}\right).
	\end{array}
	\end{equation}
	It is easy to verify that}
	\begin{equation*}
		\sum_{i=0}^2\left(\tr\left(\frac{1}{1+\omega^{1-i}\alpha^{-1}}\right) +
		\tr\left(\frac{1}{1+\omega^{2-i}\alpha^{-1}}\right)\right)=0.
        \end{equation*}
	{Therefore by Equation (\ref{traceeq}) we know that
        $$ 0=\sum_{i=0}^2\tr\Big(\frac{1}{\omega^i\alpha+\frac{1}{\omega^i\alpha}+1}\Big), $$
        which follows that}
	$\tr\left(\frac{1}{\omega^i\alpha+\frac{1}{\omega^i\alpha}+1}\right), 0\leq i\leq 2$
	are either all zeros or exactly one zero.
	Hence  $A$ has either three neighbours or exactly one neighbour.
$\hfill\Box$\end{proof}

\begin{definition}
	\label{ABType}
	Let $A$ be a TS. If $A$ has three neighbours, then we call $A$ a fat TS.
	Otherwise, we call it a slim TS.
\end{definition}

By Lemma \ref{Neigh}, a TS is either fat or slim. A TS $A_{\alpha}$ is a
fat TS if and only if
$\tr(\frac{1}{1+\alpha^{-1}}) = \tr(\frac{1}{1+\omega\alpha^{-1}})=\tr(\frac{1}{1+\omega^2\alpha^{-1}})$,
and if and only if all the elements of $A_{\alpha}$ are in $T_3$. For a slim TS,
two of its elements are in $T_2$ while exactly one element is in $T_3$.

\begin{definition}\label{def_G}
Define the graph $\mathcal{G}=(V, E)$ from TSs as follows:
\begin{enumerate}
  \item[-] The vertices are all TSs $A_{\alpha}$, where $\alpha\in\gf_{2^n}\setminus\gf_4$;
  \item[-] Two vertices $A_{\alpha}, A_{\beta}$ are joined by an edge if and only if they are adjacent.
\end{enumerate}
\end{definition}

We summarize some properties of the graph $\mathcal{G}$ defined above in the following result.
\begin{proposition}
  Let $\mathcal{G}$ be the graph defined above. Then the following results hold:
  \begin{enumerate}[(i)]
     \item The degree of each vertex is either $1$ or $3$;
     \item The rank of $M$ is $\frac{2^{n+1}-5}{3}$ if and only if the graph
       $\mathcal{G}$ does not have a $3$-regular subgraph,
       {i.e. all vertices in the subgraph have degree exactly $3$}.
  \end{enumerate}
\end{proposition}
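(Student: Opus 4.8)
The plan is to obtain (i) as an immediate consequence of Lemma \ref{Neigh}, and to obtain (ii) by setting up a dictionary between $\gf_2$-linear dependencies among the rows of $M$ and vertex subsets of $\mathcal{G}$ that induce a $3$-regular subgraph.

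For (i), Lemma \ref{TS}(ii) shows that $\mathcal{G}$ has no loops, so the degree of a vertex $A_\alpha$ is exactly the number of its distinct neighbours, which by Lemma \ref{Neigh} is three or one. The only point left to check is that a fat triple set really has three \emph{distinct} neighbours. I would argue by contradiction: write $A=\{a_1,a_2,a_3\}$ and suppose the adjacencies at $a_i$ and $a_j$ ($i\neq j$) land in a common triple set $B$, so that $a_i+1,a_j+1\in B$. By Lemma \ref{TS}(i) the third element of $B$ equals $(a_i+1)+(a_j+1)=a_i+a_j$, and since $a_1+a_2+a_3=0$ this is precisely the remaining element $a_\ell$ of $A$. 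Then $a_\ell\in A\cap B$ with $A\neq B$ (as $a_i+1\notin A$ by Lemma \ref{TS}(ii)), contradicting the fact that $\mathcal{TS}$ partitions $S$ (Lemma \ref{SUnion}). Hence the degree is exactly $3$ for a fat set and $1$ for a slim set.

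For (ii), I first recall that $|L_1|+|L_2|=\frac{2^{n+1}-5}{3}$ is exactly the number of rows of $M$ (computed in the proof of Theorem \ref{main1}(ii)), so $\rank(M)=\frac{2^{n+1}-5}{3}$ is equivalent to the rows of $M$ being $\gf_2$-linearly independent. A nontrivial dependence is a nonzero pair of indicator vectors $(p,q)$, with $p$ indexed by $L_1$ and $q$ by $L_2$, such that every column sum is even. Since each $c\in\gf_{2^n}\setminus\gf_2$ lies in the single pair $\{c,c+1\}$ and, when $\tr(1/c)=0$, in a single triple set $\tau(c)$, the column condition reads $p(\{c,c+1\})+[\tr(1/c)=0]\,q(\tau(c))=0$. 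The heart of the argument is to resolve these equations pair by pair, using that each pair $\{c,c+1\}$ lies wholly in one of $T_1,T_2,T_3$ from \eqref{eqT}: pairs in $T_1$ force their $p$-entry to $0$; pairs in $T_2$ force $p=0$ and force the unique incident (necessarily slim) triple set to carry $q=0$; and pairs in $T_3$ are exactly the edges of $\mathcal{G}$, forcing $q$ to agree on their two incident triple sets and fixing the $p$-entry to that common value. Consequently the support $W$ of $q$ consists only of fat triple sets, and whenever $A\in W$ all three neighbours of $A$ must also lie in $W$; that is, $W$ induces a $3$-regular subgraph of $\mathcal{G}$, and $p$ is then the indicator of the edges inside $W$. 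Conversely, any $3$-regular subgraph of $\mathcal{G}$ lives on fat (degree-$3$) vertices, and setting $q=\mathbf{1}_W$ together with the forced $p$ yields a nonzero dependence. Therefore a nontrivial dependence exists if and only if $\mathcal{G}$ has a $3$-regular subgraph, which is (ii).

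I expect the main obstacle to be the clean correspondence between edges of $\mathcal{G}$ and the $T_3$-pairs in the backward direction: to read off ``$p$ is the indicator of the edges inside $W$'' I must know that distinct edges at a fat vertex correspond to distinct $T_3$-pairs and distinct neighbours, i.e.\ that $\mathcal{G}$ has no multiple edges, which is exactly the distinctness proved in (i). The other delicate point is keeping the bookkeeping of the $T_2$ boundary pairs correct, where a pair straddles $S$ and its complement so that only one of its two columns meets a triple set; once these two points are handled, the remainder is routine linear algebra over $\gf_2$.
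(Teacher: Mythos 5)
Your proof is correct, and at bottom it rests on the same translation as the paper's: since $M$ has exactly $|L_1|+|L_2|=\frac{2^{n+1}-5}{3}$ rows, full rank is equivalent to $\gf_2$-independence of the rows, and nontrivial dependencies correspond to (nonempty) $3$-regular subgraphs of $\mathcal{G}$. The execution, however, is organized differently. The paper writes a putative dependence as $\xi_1+\cdots+\xi_s=\eta_1+\cdots+\eta_t$ (weight-$3$ rows on one side, weight-$2$ rows on the other, using that rows of equal weight have disjoint supports), reads off from the resulting equality of supports that every triple set $A_i$ on the left has all three of its neighbours among the others, and concludes that $\{A_1,\dots,A_s\}$ forms a $3$-regular subgraph; it proves only this direction and leaves the converse (that a $3$-regular subgraph produces a dependence) unstated. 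Your column-by-column analysis with the indicator pair $(p,q)$ and the $T_1/T_2/T_3$ trichotomy reaches the same forward conclusion --- including the point, contained in your own equations, that $q=0$ forces $p=0$, so the support $W$ is genuinely nonempty --- and you additionally carry out the backward direction explicitly, checking that $q=\mathbf{1}_W$ together with the edge-indicator $p$ satisfies every column equation (correctly using that in a graph of maximum degree $3$, a $3$-regular subgraph must contain all neighbours of each of its vertices). You also supply the distinctness-of-neighbours argument behind (i), i.e.\ that $\mathcal{G}$ has no multiple edges, which the paper compresses into ``follows from Lemma \ref{Neigh}''. So your route buys completeness: of the two write-ups, yours is the only one that actually establishes both implications of the ``if and only if''; the paper's support-equality bookkeeping is slightly shorter but covers only one direction.
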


\begin{proof}
  (i) The result follows from Lemma \ref{Neigh}.

  (ii) Note that any  row vector of $M$ has Hamming weight either $2$ or $3$.
  For a row vector with Hamming weight $3$, its support corresponds to a TS; while
  for a row vector with Hamming weight $2$, its support corresponds to
  a set with form $\{\beta, 1+\beta\}, \beta\in \gf_{2^n}\setminus\gf_2$.
  Assume that the row vectors of $M$ are linear dependent over $\gf_2$.
  Since the supports of any two vectors of $M$ with the same
  Hamming weight are disjoint, we have $\xi_{1}+\cdots+\xi_{s}=\eta_{1}+\cdots+\eta_t$, where $\xi_i$ are vectors
  with Hamming weight $3$ and $\eta_j$ are those with weight $2$.
  Let the corresponding TS of $\xi_i$ be $A_i$ and let  $A_1=\{a,b,c\}$.
  Then $a$ and $1+a$ are in $\bigcup_{j=1}^t \supp(\eta_j)=\bigcup_{j=1}^s \supp(\xi_j)$.
  Hence $1+a$ is in the support of  $\xi_i$
  for some $2\leq i\leq s$. So
  are $1+b$ and $1+c$. It then follows that $A_1$ has $3$ neighbours and the set of its neighbours
  is a subset of $\{A_i| 2\leq i\leq s\}$.
  Denote by $\mathcal{H}$ the subgraph of $\mathcal{G}$ formed by the vertices $A_i, 1\leq i\leq s$.
  Then $\mathcal{H}$ is a $3$-regular subgraph of $\mathcal{G}$.
  We complete the proof.
$\hfill\Box$\end{proof}

We use the following table to list some properties of the graph $\mathcal{G}$ defined above.
For the definition of the girth, connected components and diameter, please refer to
any textbook on graph theory. The value of diameter in the table refers to the
largest diameter of each connected components.

\begin{center}
\label{tab:table3}
\tabcaption{Computational results of the graph $\mathcal{G}$ on $\gf_{2^n}$
for $6\leq n\leq 12$ with $n$ even}
 \begin{tabular}{|c|c|c|c|c|c|} \hline
   $n$  &$\#$ of vertices &$\#$ of edges & Girth           & $\begin{array}{ll}
				 & \#\ \mbox{of connected} \\
				 & \mbox{components}
			     \end{array} $  & Diameter  \\ \hline
  $6$   &$10$ & $6$ & no cycle        & $4$                          & $2$     \\ \hline
  $8$   &$42$ & $35$ & $8$             & $8$      & $6$  \\ \hline
  $10$  &$170$   & $120$   & $5$             & $51$     & $4$ \\ \hline
  $12$  &$682$ & $517$ & $9$         & $170$       &$18$  \\ \hline
\end{tabular}
\end{center}

\section{Non-decomposable Preferred Boolean Functions}

It is known from Proposition \ref{PFBrela} that the set $\mathcal{PBF}$ is a $\gf_2$-subspace.
To obtain linear independent PBFs, we focus on non-decomposable PBFs in this section.
After introducing its definition, we give a characterization of non-decomposable PBFs.
Then a large subspace of PBFs is explicitly constructed, which can lead to
many differentially $4$-uniform permutations.

\begin{definition}
	Let $f$ be a nonzero PBF. If there exist two PBFs $f_1$ and $f_2$ such that $f=f_1+f_2$
	and $\supp(f_i)\subsetneq \supp(f), 1\leq i\leq 2$,
	then $f$ is called \textit{decomposable}. Otherwise it is called \textit{non-decomposable}.
\end{definition}

Before giving the characterization of non-decomposable PBFs, we first state
some properties of them.
Let $f$  be a non-decomposable PBF with $f(0)=0$.  By Theorem \ref{main1}(ii), we have
$Mf^{\rm{T}}=0$.
Since $f(x+1)=f(x)$ holds for any $x\in \gf_{2^n}$,
the weight of $f$ must be even.
Assume that $|\supp(f)|=2t$ and $\supp(f)=\{\beta_i, \beta_i+1 | 1\leq i\leq t\}$
for some positive integer $t$.
Let $A$ be a TS. Then $|\supp(f)\cap A|=0$ or $2$.
In the following, assume that there are $r$ ($0\leq r\leq t$) TSs
$A_{i}=\{a_i, b_i, a_i+b_i\}$ such that $\supp(f)\cap A_i=\{a_i, b_i\}$.
Since $\mathcal{TS}$ is a disjoint union of the set $S=\{x\in\gf_{2^n}\setminus\gf_2|\tr(1/x)=0\}$,
we have $\supp(f)\cap S=\bigcup_{i=1}^{r}(A_i\cap \supp(f))$. \\

Now we present the main theorem in this section. Recall that the sets
$T_i $ ($1\leq i\leq 3$) are defined in \eqref{eqT}.

\begin{theorem}
  \label{NonDecSolution}
  Let $f$  be a Boolean function with $n$ variables.
{
  Assume that $|\supp(f)|=2t$ and there are $r\; (0\leq r\leq t)$ TSs
  $A_{i}=\{a_i, b_i, a_i+b_i\}$ such that $\supp(f)\cap A_i=\{a_i, b_i\}$.
  }
	Then the following results hold:
	\begin{enumerate}[(i)]
	  \item If $t=1$, then $f$ is a non-decomposable PBF if and only if
	    $r=0$ and there exists $\beta \in T_1$ such that $\supp(f)=\{\beta, 1+\beta\}$;
	  \item If $t=2$, then $f$ is a non-decomposable PBF if and only if
	    $r=1$ and there exists a slim
		  TS $A=\{\beta_1, \beta_2, \beta_1+\beta_2\}$ such that
		  $\supp(f)=\{\beta_1, \beta_2, 1+\beta_1, 1+\beta_2\}$, where
		  $\beta_1, \beta_2\in T_2$;
	  \item If $t\geq 3$, then either $r=t$ or $r=t-1$.
	    Furthermore,
            \begin{enumerate}[(a)]
	      \item
	    If $r=t$, then $f$ is a non-decomposable PBF if and only if
		there exist fat TSs $A_1=\{\beta_1, \beta_2, \beta_1+\beta_2\}$,
		  $A_i=\{1+\beta_{i-1}, \beta_{i+1}, 1+\beta_{i-1}+\beta_{i+1}\}$,
		  $2\leq i\leq t-1$, and $A_t=\{1+\beta_{t-1}, 1+\beta_{t}, \beta_{t-1}+\beta_{t}\}$ such that
		  $A_1, \cdots, A_{t-1}$ and $A_t$ form  a circle of TSs,
		  and $\supp(f)=\{\beta_i, 1+\beta_i| 1\leq i\leq t\}$.
	      \item	
            If $r=t-1$,  then $f$ is a non-decomposable PBF if and only if
	    there exist TSs $A_1=\{\beta_1, \beta_2, \beta_1+\beta_2\}$,
		  $A_2=\{1+\beta_1, \beta_3, 1+\beta_1+\beta_3\}$, and
		  $A_i=\{1+\beta_{i}, \beta_{i+1}, 1+\beta_{i}+\beta_{i+1}\}$,
		  $3\leq i\leq r$ such that $A_1, A_r$ are slim TSs and
		  $A_2, \cdots, A_{r-1}$ are fat TSs,
		  and $\supp(f)=\{\beta_i, 1+\beta_i|1\leq i\leq t\}$.
	    \end{enumerate}
	\end{enumerate}
\end{theorem}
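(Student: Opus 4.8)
The plan is to translate the combinatorial notion of (non-)decomposability into a connectivity question on an auxiliary graph built from $\supp(f)$, and then to read off the three cases from the elementary classification of connected graphs of maximum degree two.

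First I would record the constraints that being a PBF imposes on the support. By Theorem \ref{ChPBF}, $f$ (with $f(0)=0$) is a PBF exactly when $f(x)=f(x+1)$ and every triple set $A$ meets $\supp(f)$ in an even number of points, i.e. $|A\cap\supp(f)|\in\{0,2\}$. The relation $f(x)=f(x+1)$ makes $\supp(f)$ a disjoint union of the $t$ pairs $\{\beta_i,1+\beta_i\}$, and for a support element $a\in S$ one has $1+a\in S$ precisely when $a\in T_3$, whereas $a\in T_2$ forces $1+a\notin S$ and $a\in T_1$ forces $a\notin S$. Using Lemma \ref{SUnion} (that $\mathcal{TS}$ partitions $S$) and Lemma \ref{TS}(ii), I would then see that each of the $r$ relevant triple sets meets the support in exactly two points, lying in two \emph{distinct} pairs, and that such a support element is a ``linking'' end if it lies in $T_3$ and a ``dangling'' end if it lies in $T_2$. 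Writing $t_1,t_2,t_3$ for the number of pairs whose representative lies in $T_1,T_2,T_3$, this gives the bookkeeping identities $t_3=2r-t$ and $t_2=2(t-r)$ (whence $t/2\le r\le t$), together with the fat/slim translation via the computation in \eqref{traceeq}: a triple set with two support elements in $T_3$ is fat, one with a $T_3$ element and a $T_2$ element (resp. two $T_2$ elements) is slim.

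Next I would set up the decomposition correspondence. Since each part of a decomposition $f=f_1+f_2$ with $\supp(f_i)\subsetneq\supp(f)$ is itself a PBF, each part is a union of whole pairs, and the even-intersection condition forces the two support-pairs of every triple set into the \emph{same} part. Conversely, I would check that any union of such ``blocks'' is again a PBF support. Hence $f$ is a non-decomposable PBF if and only if the pairs form a single block under the relation ``two pairs share a triple set''; in particular a pair with $\beta_i\in T_1$ shares no triple set, so for $t\ge 2$ non-decomposability forces $t_1=0$. I would then encode blocks by the graph $\mathcal{C}$ whose vertices are the $r$ support triple sets, with one edge per $T_3$-pair (joining the two triple sets containing its ends) and one dangling half-edge per $T_2$-pair. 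Each vertex carries exactly two ports (its two support elements), so $\mathcal{C}$ has maximum degree two, and Lemma \ref{TS}(ii) together with the disjointness of triple sets rules out loops and multi-edges (two ends of distinct pairs cannot land in the same pair of triple sets without forcing a shared third element). Thus, for $t\ge 2$, non-decomposability is exactly ``$\mathcal{C}$ is connected'', and a connected graph of maximum degree two is a single path or a single cycle.

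Reading off the cases then finishes the proof. For $t=1$ the support is one pair carrying no triple-set constraint, so it is a PBF iff $\beta\in T_1$, forcing $r=0$; this is (i). A single vertex with two dangling ends is $t=2,\ r=1$, the triple set being slim with both support elements in $T_2$; this is (ii). For $t\ge 3$ a cycle of $m$ fat triple sets has $t_2=0$, $t=t_3=m=r$, giving (iii)(a), while a path of $m$ triple sets has $m-1$ linking $T_3$-pairs and two dangling $T_2$-pairs at its two slim endpoints (the interior vertices being fat), so $t=m+1$ and $r=m=t-1$, giving (iii)(b). The explicit forms for $A_1,\dots,A_t$ are obtained simply by walking along the cycle (resp. path) and applying $a+(1+a)=1$ at each link. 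I expect the main obstacle to be precisely this last bookkeeping: verifying that the adjacency walk reproduces the stated closed forms and that the endpoints close up correctly in the cyclic case, and confirming in each case that the forced trace membership of the $\beta_i$ matches the fat/slim dichotomy. The conceptual core---the reduction to connectivity and the path/cycle classification---is routine once $\mathcal{C}$ is in place; the care lies in the fat/slim accounting and in ruling out any stray connected configuration for $t\ge 3$.
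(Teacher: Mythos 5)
Your proposal is correct, and it takes a genuinely cleaner route than the paper's own proof, although both rest on the same ingredients: Theorem \ref{ChPBF} (support is a union of pairs meeting every triple set in an even number of points), the partition of $S$ from Lemma \ref{SUnion}, Lemma \ref{TS}, and the fat/slim dichotomy read off from $T_2/T_3$ membership. The paper argues case by case: (i) and (ii) are handled directly (your no-multi-edge observation is exactly the paper's contradiction that $(1+\beta_1)+(1+\beta_2)=\beta_1+\beta_2$ would violate disjointness of triple sets), and for $t\geq 3$ it walks from one triple set to the next, invoking non-decomposability at each step to force the chain either to close into a circle (all fat) or to terminate at two slim triple sets, with the induction compressed into ``so on and so forth''; the converse is declared easy and omitted. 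You instead isolate a structural equivalence the paper never states: sub-PBFs of $f$ are exactly unions of blocks of the pair/triple-set incidence structure, hence $f$ is non-decomposable if and only if that structure is connected, and the classification of connected graphs of maximum degree two (a single path or a single cycle) yields all four cases uniformly. This makes the paper's informal iteration rigorous, and it is precisely the viewpoint the paper adopts only after the theorem (Definition \ref{def_G} and the proposition counting non-decomposable PBFs via cycles and paths of $\mathcal{G}$), which would become an immediate byproduct of your argument. What the paper's walk buys is the exact indexed form of the $A_i$ in the statement; note that its labelling is not the sequential one your walk produces (in the paper's indexing the pair $\{\beta_k,1+\beta_k\}$ joins $A_{k-1}$ to $A_{k+1}$, i.e.\ the cycle grows in both directions from $A_1$), so a harmless relabelling is needed there. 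Two small points of order when writing it up: the identities $t_3=2r-t$ and $t_2=2(t-r)$ presuppose $t_1=0$, so they should come after the observation that non-decomposability excludes $T_1$-pairs for $t\geq 2$; and in case (i) the correct phrasing is that a pair meeting $S$ would force a triple set with odd intersection (the paper's own argument), not that a single pair ``carries no triple-set constraint''.
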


\begin{proof}
	(i) Let $t=1$ and assume $\supp(f)=\{\beta, 1+\beta\}$. Then $\supp(f)\cap S=\emptyset$. Otherwise,
	there exists a TS $A$ such that $\supp(f)\subseteq A$. Then it follows that $1=\beta+(1+\beta)\in A$,
	which is a contradiction as $1\not\in S$. Hence $\beta\in T_1 $ and $r=0$.
	Conversely, for any $\beta\in T_1$,
	let $g$ be a vector with $\supp(g)=\{\beta, 1+\beta\}$. Then clearly
	$g$ is a non-decomposable PBF {by Theorem \ref{ChPBF}}.

	(ii) First note that, if $t\geq 2$ and $f$ is non-decomposable,
	we may see from (i) that $\beta_i\notin T_1$
	holds for any $1\leq i\leq t$ ({as otherwise assume $\beta_i\in T_1$, then
	$f$ can be decomposed into the sum
	of two PBFs, one with support set $\supp(f)\setminus\{\beta_i, \beta_i+1\}$
	and one with support set $\{\beta_i,\beta_i+1\}$}).
	In other words, for any $1\leq i\leq t$, either $\beta_i\in T_2$ or
	$\beta_i\in T_3$.
	Then $\{\beta_i, 1+\beta_i\}\cap S\neq \emptyset$
	for any $1\leq i\leq t$. Hence $r\geq \lceil\frac{t}{2}\rceil\geq 1$.

	Now, let $t=2$. Assume that $f$ is a non-decomposable PBF with support set
	$\{\beta_1,\beta_1+1,\beta_2,\beta_2+1\}$.
	Let $A_1$ be a TS such that $|\supp(f)\cap A_1|=2$. Clearly, for any $i=1,2$,
		at most one of $\beta_i$ and $ 1+\beta_i$ is in $A_1$. Without loss of generality,
	we assume that $\beta_1, \beta_2\in A_1$. Hence $A_1=\{\beta_1,\beta_2,\beta_1+\beta_2\}$.
	Now we claim that $\beta_1\in T_2$. Otherwise, we have
	$\beta_1\in T_3$, which means that $1+\beta_1\in S\cap \supp(f)$.
	Since $1+\beta_1\notin A_1$, let $A_2$ be the TS
	such that $1+\beta_1\in A_2$. Then $|A_2\cap \supp(f)|=2$, which deduces that $1+\beta_2\in A_2$. Hence
	$(1+\beta_1)+(1+\beta_2)=\beta_1+\beta_2\in A_2$, which  contradicts the fact
	that  different TSs are disjoint. Thus $\beta_1\in T_2$. Similarly, we can show that
	$\beta_2\in T_2$. Therefore, {$A_1=\{\beta_1,\beta_2,\beta_1+\beta_2\}$} is a slim TS
	and ${\supp(f)=\{\beta_1,\beta_1+1,\beta_2,\beta_2+1\}}$, which implies that
	{$r=1$}.

	{Conversely}, let $A=\{a, b, a+b\}$ be a slim TS and $a, b\in T_2$.
	Then the vector $v_g$ satisfying $\supp(v_g)=\{a, b, 1+a, 1+b\}$ is a non-decomposable
	PBF.

        (iii)
	From now on, we assume that $t\geq 3$. Then $r\geq \lceil\frac{t}{2}\rceil\geq 2$
	if $f$ is non-decomposable. First assume that $f$ is non-decomposable,
	we distinguish the following two cases:

	(a) All TSs $A_i$ with $A_i\cap\supp(f)\neq\emptyset$ are fat TSs.
	By the notations at the beginning of this section, we assume that there are $r$
	such TSs, where $2\leq r\leq t$;

	(b) {There exists at least one slim TS $A_i$ among those $A_i\cap\supp(f)\neq\emptyset$.}
	Further, for any slim TS $A_i=\{a_i, b_i, a_i+b_i\}$, $a_i, b_i \in \supp(f)$,
	we have $a_i\notin T_2$ or $b_i\notin T_2$, or equivalently, $a_i\in T_3$ or $b_i\in T_3$.
        {Otherwise, if $a_i,b_i\in T_2$, we may decompose $f$ into the sum of $f_1$ and $f_2$,
	where $f_1,f_2$ are PBFs, one with support set $\supp(f)\setminus\{a_i,b_i,1+a_i,1+b_i\}$
	and one with support set $\{a_i,b_i,a_i+1,b_i+1\}$ (by (ii) such a function is a PBF).}

	{Case (a)}: Assume that $t\geq 3$ and all $A_i$ are fat TSs for $1\leq i\leq r$.
	Assume that $A_1=\{\beta_1, \beta_2, \beta_1+\beta_2\}$.
	Since $A_1$ is fat, we have $1+\beta_1, 1+\beta_2\in S\cap \supp(f)$.
	Clearly, $1+\beta_1$ and $1+\beta_2$ can not be in {one} TS.
	Assume that $1+\beta_1\in A_2$ and $1+\beta_2\in A_3$.
	Denote by $\beta_3$ the other element in $A_2\cap \supp(f)$.
	Then $A_2=\{1+\beta_1, \beta_3, 1+\beta_1+\beta_3\}$.
	Similarly, we have $1+\beta_3\in S\cap \supp(f)$ since $A_2$ is fat.

	If $t=3$, then $r\leq 3$. Hence $A_3=\{1+\beta_2, 1+\beta_3, \beta_2+\beta_3\}$.
	Then $\supp(f)=\{\beta_i, 1+\beta_i, 1\leq i\leq 3\}$ is a non-decomposable PBF.
	If $t>3$, then $1+\beta_3\notin A_3$ since $f$ is non-decomposable. Without loss of generality,
	assume that $1+\beta_3\in A_4$ and $\beta_4\in A_3$.
	So on and so forth, since $f$ is non-decomposable, we have $r=t$
	and
	$A_1=\{\beta_1, \beta_2, \beta_1+\beta_2\}$,
	$A_i=\{1+\beta_{i-1}, \beta_{i+1}, 1+\beta_{i-1}+\beta_{i+1}\}$,
	  $2\leq i\leq t-1$, $A_t=\{1+\beta_{t-1}, 1+\beta_{t}, \beta_{t-1}+\beta_{t}\}$.
	Then   $A_1, \cdots, A_t$ forms a circle of fat TSs with length $t$.

	{Case (b):} Assume that $t\geq 3$ and there exists $1\leq i\leq r$ such that $A_i$ is a slim TS.
	Without loss of generality, we assume that $A_1=\{\beta_1, \beta_2, \beta_1+\beta_2\}$ is a slim TS.
	Then $\beta_1\in T_3$ or $\beta_2\in T_3$.
	Without loss of generality, we assume that $\beta_1\in T_3$ and $\beta_2, \beta_1+\beta_2\in T_2$
	(Note that there is one and only one element of an slim TS in $T_3$).
	It follows that $1+\beta_1\in S\cap \supp(f)$ and $1+\beta_2\notin S$. Without loss of generality,
	assume that $1+\beta_1\in A_2$ and $\beta_3\in A_2$. Then $A_2=\{1+\beta_1, \beta_3, 1+\beta_1+\beta_3\}$.

	If $A_2$ is slim, then $\beta_3\in T_2$ and $1+\beta_3\notin S$.
	Hence the vector $v$ such that $\supp(v)=\{\beta_i, 1+\beta_i| 1\leq i\leq 3\}$ is a PBF.
	Since $V_f$ is non-decomposable, this can only happen when $t=3$. And in this case, we have $r=2=t-1$.

	If $A_2$ is fat, then $\beta_3\in T_3$ and $1+\beta_3\in S\cap \supp(f)$.
	Similarly, we can assume that
	$1+\beta_3\in A_3$ and $\beta_4\in A_3$.  Hence $A_3=\{1+\beta_3, \beta_4, 1+\beta_3+\beta_4\}$.
	Similarly as just discussed, if  $A_3$ is slim, then $t=4$, $r=3=t-1$, and
	$\supp(f)=\{\beta_i, 1+\beta_i| 1\leq i\leq 4\}$ is a non-decomposable PBF.
	If $A_3$ is fat, then we can assume that $A_4=\{1+\beta_4, \beta_5, 1+\beta_4+\beta_5\}$.
	So on and so forth, since $f$ is non-decomposable, we have $r=t-1$.
	Further, $A_1=\{\beta_1, \beta_2, \beta_1+\beta_2\}$,
	$A_2=\{1+\beta_1, \beta_3, 1+\beta_1+\beta_3\}$,
	and $A_i=\{1+\beta_{i}, \beta_{i+1}, 1+\beta_{i}+\beta_{i+1}\}$,
	$3\leq i\leq r$, where
	$A_1, A_r$ are slim TSs and $A_2, \cdots, A_{r-1}$ are fat TSs.

	The proof of the converse part is not difficult and we omit it here.
	We complete the proof.
$\hfill\Box$\end{proof}

The following proposition follows directly from Theorem \ref{NonDecSolution}.

\begin{proposition}
	Let $n=2k$  be an even integer and $\mathcal{G}$ be the graph defined in Definition \ref{def_G}.
	Then the following results hold:
	\begin{enumerate}[(i)]
	  \item The number of type (i) non-decomposable PBFs in Theorem \ref{NonDecSolution}
	    is
	    $$ \frac{1}{8}\left( 2^n + 1 - 2^{n/2+1}\cos\left(n\arccos(1/\sqrt{8})\right)\right), $$
 i.e. half of the cardinality of $T_1$;	
	  \item The number of type (ii) non-decomposable PBFs is the number of the slim TSs;
	  \item A type (iii)(a) non-decomposable PBF with weight $2t$ exists
	    if and only if there exists a cycle in $\mathcal{A}$ of length $t$,
	    where $\mathcal{A}$ is the subgraph of $\mathcal{G}$ generated by all fat vertices;
	  \item A type (iii)(b) non-decomposable PBF with weight $2t$ exists
	    if and only if there exists a path of  $\mathcal{G}$ with length $t$,
	    where the starting and ending vertices are slim vertices, and the others are fat.
	\end{enumerate}
\end{proposition}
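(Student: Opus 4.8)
The plan is to read off each of the four assertions directly from the characterization in Theorem \ref{NonDecSolution}, converting the combinatorial descriptions of the support sets into counts and into the cycle/path language of the graph $\mathcal{G}$. For parts (i) and (ii) the work is to exhibit explicit bijections between the relevant non-decomposable PBFs and the natural indexing objects (pairs inside $T_1$, and slim TSs); for parts (iii)(a) and (iii)(b) the work is to recognize that the ``circles'' and the slim-ended chains of TSs produced by Theorem \ref{NonDecSolution} are literally cycles and paths in $\mathcal{G}$ and in its fat-vertex subgraph $\mathcal{A}$.

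For (i), by Theorem \ref{NonDecSolution}(i) a type (i) non-decomposable PBF is exactly one whose support is $\{\beta, 1+\beta\}$ with $\beta\in T_1$. First I would check that $T_1$ is invariant under the involution $x\mapsto x+1$: from the definition $T_1=\{x:\tr(1/x)=\tr(1/(x+1))=1\}$ in (\ref{eqT}) this symmetry is immediate. Hence the sets $\{\beta, 1+\beta\}$ with $\beta\in T_1$ partition $T_1$ into $|T_1|/2$ two-element blocks, and distinct blocks give distinct support sets (each determining a single PBF under the convention $f(0)=0$), so the number of type (i) PBFs is exactly $|T_1|/2$. Substituting the value of $|T_1|$ from Lemma \ref{sizeTi} yields the stated expression $\frac{1}{8}\big(2^n+1-2^{n/2+1}\cos(n\arccos(1/\sqrt{8}))\big)$.

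For (ii), I would invoke the observation recorded just after Definition \ref{ABType}, namely that a slim TS has exactly two elements in $T_2$ and one in $T_3$. Given a slim TS $A$, label its two $T_2$-elements $\beta_1,\beta_2$; by Theorem \ref{NonDecSolution}(ii) the function with support $\{\beta_1,\beta_2,1+\beta_1,1+\beta_2\}$ is a type (ii) non-decomposable PBF, and conversely each type (ii) PBF singles out the unique slim TS $A=\{\beta_1,\beta_2,\beta_1+\beta_2\}$. This assignment is a bijection, so the count equals the number of slim TSs.

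For (iii), the key point is that the adjacency used to chain the TSs in Theorem \ref{NonDecSolution} --- consecutive TSs sharing the complementary pair $\beta_i$, $1+\beta_i$ --- is exactly the edge relation of $\mathcal{G}$ (adjacency at two elements summing to $1$, Definition \ref{adj} and Definition \ref{def_G}). In case (iii)(a) all the $A_i$ are fat and close up into a circle, which is precisely a cycle through $t$ vertices of the fat-vertex subgraph $\mathcal{A}$; in case (iii)(b) the chain has slim endpoints $A_1,A_r$ and fat interior, which is precisely a path of $\mathcal{G}$ with slim ends and fat interior vertices. The main obstacle is the bookkeeping in this last step: one must verify that the local adjacency ``at $\beta_i/1+\beta_i$'' produced in the proof of Theorem \ref{NonDecSolution} coincides with Definition \ref{adj}, and then match the weight $2t$ of the PBF against the number of vertices in the corresponding cycle/path so that the stated length is correct (the cycle in (a) carries $t$ fat TSs, while the path in (b) realizes the $t$ pairs $\{\beta_i,1+\beta_i\}$ through its slim-to-slim chain). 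Once this correspondence is pinned down, both equivalences in (iii) follow immediately from Theorem \ref{NonDecSolution}(iii).
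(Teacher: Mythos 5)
Your proposal is correct and takes essentially the same route as the paper: the paper in fact offers no argument at all beyond the remark that the proposition ``follows directly from Theorem \ref{NonDecSolution}'', and your write-up simply supplies the routine details left implicit (the $\{\beta,1+\beta\}$-pairing of $T_1$ together with Lemma \ref{sizeTi} in (i), the slim-TS bijection in (ii), and the identification of the circles/chains of TSs with cycles/paths of $\mathcal{G}$ via Definition \ref{adj} in (iii)). No gaps worth flagging; the only looseness, the exact meaning of ``length $t$'' in part (iv), is inherited from the paper's own statement, and your reading of it (the cycle carrying $t$ fat TSs, the slim-to-slim chain realizing the $t$ pairs) is the intended one.
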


Computer experiments on small fields suggest that there exist many non-decomposable PBFs of the type (iii)(b)
and much few those of the type (iii)(a). In the following, we give some experiment results about the
non-decomposable PBFs of the type (iii)(a).
Note that the minimal value of $t$ such that a type (iii)(a) non-decomposable PBF exists
is the girth of the subgraph of $\mathcal{G}$ generated by all fat TSs.
Also, the maximal value of such $t$ is $2d+1$, where $d$ is the maximal diameter
of all connected components of this subgraph.
We use the following table to list the properties of the subgraph of $\mathcal{G}$
generated by the fat TSs. The notations are the same as Table 3.

 \begin{center}
    \tabcaption{Computational results of the subgraph of $\mathcal{G}$
    generated by the fat TSs on $\gf_{2^n}$
    for $6\leq n\leq 12$ with $n$ even}
    \begin{tabular}{|c|c|c|c|c|c|} \hline
	   $n$   & $\#$ of vertices &$\#$ of edges & Girth           & $\begin{array}{ll}
	                                 & \#\ \mbox{of connected} \\
					 & \mbox{components}
			             \end{array} $  & Diameter  \\ \hline
	  $6$   & $1$ & $0$  & no cycle        & $1$     & $0$     \\ \hline
	  $8$   & $14$ & $13$ & $8$             & $2$      & $4$  \\ \hline
	  $10$  & $35$  & $15$ & $5$             & $21$     & $2$ \\ \hline
	  $12$  & $176$ & $138$ & $9$         & $43$      & $18$  \\ \hline
   \end{tabular}
  \end{center}

Finally, we apply Theorem \ref{NonDecSolution} to explicitly construct a large set of
PBFs, and hence obtain many differentially $4$-uniform permutations.

We first introduce some notations.
For any $\beta\in T_1$, define a function $f_\beta$ as
$$ f_{\beta}(x)=(x+\beta)^{2^n-1}+(x+\beta+1)^{2^n-1}. $$
Let $A=\{a_1, a_2, a_1+a_2\}$ be a slim TS, where $a_1, a_2\in T_2$.
Define the function $f_A$ as
$$f_{A}(x)=\sum_{i=1}^2\left((x+a_i)^{2^n-1}+(x+a_i+1)^{2^n-1}\right).$$

\begin{theorem}
  \label{LowWeight}
Let $\mathcal{X}$ be the set of non-decomposable PBFs with weight $2$ or $4$.
Then
\begin{equation*}
  \mathcal{X}=\Big\{f_{\beta}, f_A \;|\; \beta\in T_1, A \text{ is a slim TS}\Big\}
\end{equation*}
and
$|\mathcal{X}|= \frac{|T_1|}{2}+\Big|\{ A\;|\; A \text{ is a slim TS}\}\Big|$,
where $f_{\beta}, f_A$ are as defined above.
Define the matrix $X$ with each row the value vector of
a Boolean function in $\mathcal{X}$. Then the rank of $X$ is $|\mathcal{X}|$.
Therefore, the rows of $X$ generate a subspace of $\mathcal{PBF}$
with dimension $|\mathcal{X}|$. Denote this subspace by  $\mathcal{PBF}_4$.
By Corollary \ref{th_general}, for each $f\in\mathcal{PBF}_4$,
we can construct a differentially $4$-uniform permutation
on $\gf_{2^n}$. Therefore we explicitly obtain $2^{|\mathcal{X}|}$ such functions.
\end{theorem}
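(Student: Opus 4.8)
The plan is to assemble the theorem from the classification already established in Theorem \ref{NonDecSolution}, and then to settle the rank claim by a disjoint-support argument. First I would verify the explicit formulas for the members of $\mathcal{X}$. Recall that in $\gf_{2^n}$ one has $x^{2^n-1}=1$ for $x\neq 0$ and $0^{2^n-1}=0$, so $(x+\beta)^{2^n-1}$ is the complement of the indicator of $\beta$. A short case check then shows $\supp(f_\beta)=\{\beta,1+\beta\}$ and $\supp(f_A)=\{a_1,a_2,1+a_1,1+a_2\}$. Comparing with Theorem \ref{NonDecSolution}(i) (the $t=1$ case, which forces $\beta\in T_1$ and $r=0$) and Theorem \ref{NonDecSolution}(ii) (the $t=2$ case, which forces a slim TS $A=\{a_1,a_2,a_1+a_2\}$ with $a_1,a_2\in T_2$ and $r=1$) identifies $\mathcal{X}$ precisely with $\{f_\beta,f_A\mid\beta\in T_1,\ A\text{ slim}\}$, since weight $2$ means $t=1$ and weight $4$ means $t=2$, so no further cases occur.

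For the cardinality I would note that $\beta\in T_1$ implies $1+\beta\in T_1$ (directly from the definition of $T_1$, since the two trace conditions are symmetric under $x\mapsto x+1$), and that $f_\beta=f_{1+\beta}$ because both have support $\{\beta,1+\beta\}$. Hence the weight-$2$ members are counted by pairs $\{\beta,1+\beta\}$, giving $|T_1|/2$ of them, while the weight-$4$ members are in bijection with slim TSs. This yields $|\mathcal{X}|=|T_1|/2+|\{A\mid A\text{ slim}\}|$.

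The core step is the rank computation, which amounts to the $\gf_2$-linear independence of the value vectors in $\mathcal{X}$. The key observation is how the supports meet the set $S=\{x\in\gf_{2^n}\setminus\gf_2\mid \tr(1/x)=0\}$ from Lemma \ref{SUnion}. Every element of a TS lies in $S$, so $a_1,a_2\in S$; moreover, since $a_i\in T_2\cap S$ forces $\tr(1/(1+a_i))=1$, one has $1+a_i\notin S$, whence $\supp(f_A)\cap S=\{a_1,a_2\}$. As distinct slim TSs are disjoint (Lemma \ref{SUnion}), the restrictions $f_A|_S$ have pairwise disjoint supports of size $2$, whereas each $f_\beta$ has support inside $T_1$ and hence disjoint from $S$. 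Suppose a combination $\sum c_\beta f_\beta+\sum c_A f_A=0$ over $\gf_2$. Restricting to $S$ annihilates every $f_\beta$ and leaves $\sum c_A (f_A|_S)=0$; disjointness of the $f_A|_S$ forces every $c_A=0$. The residual relation $\sum c_\beta f_\beta=0$ then has pairwise disjoint supports $\{\beta,1+\beta\}$, so every $c_\beta=0$. Thus $\rank(X)=|\mathcal{X}|$. I expect this disjointness bookkeeping, and in particular the verification that $\supp(f_A)\cap S=\{a_1,a_2\}$, to be the only delicate point, since it is exactly where the slim-TS structure must be invoked.

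Finally, since each $f_\beta,f_A$ is a PBF (Theorem \ref{NonDecSolution}) and $\mathcal{PBF}$ is a $\gf_2$-subspace (Theorem \ref{main1}(i)), their $\gf_2$-span $\mathcal{PBF}_4$ lies in $\mathcal{PBF}$ and, by the independence just proved, has dimension $|\mathcal{X}|$, so $|\mathcal{PBF}_4|=2^{|\mathcal{X}|}$. Applying Corollary \ref{th_general} to each $f\in\mathcal{PBF}_4$ produces a differentially $4$-uniform permutation $G$; distinct $f$ give distinct $H(x)=x+f(x)$ and hence distinct $G=H\circ I$, so we obtain $2^{|\mathcal{X}|}$ such permutations, completing the proof.
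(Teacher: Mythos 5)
Your proposal is correct, and in fact the paper offers no proof to compare it against: its ``proof'' of Theorem \ref{LowWeight} is literally ``The proof is simple and we omit it here.'' Your argument is the natural reconstruction of what was omitted, built from exactly the ingredients the paper points to: the weight-$2$ and weight-$4$ cases of Theorem \ref{NonDecSolution} identify $\mathcal{X}$ with $\{f_\beta, f_A\}$, the partition property of Lemma \ref{SUnion} together with the observation that $\supp(f_A)\cap S=\{a_1,a_2\}$ (and $\supp(f_\beta)\cap S=\emptyset$ since $T_1\cap S=\emptyset$) gives the disjoint-support linear independence and hence $\rank(X)=|\mathcal{X}|$, and Theorem \ref{main1}(i) plus Corollary \ref{th_general} (with injectivity of $f\mapsto x^{-1}+f(x^{-1})$ via bijectivity of $I$) yields the $2^{|\mathcal{X}|}$ distinct permutations. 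The only bookkeeping points worth keeping explicit in a final write-up are the ones you already flagged: $T_1$ is invariant under $x\mapsto x+1$ (so the weight-$2$ functions are counted by pairs, giving $|T_1|/2$), and distinct slim TSs give distinct $f_A$ because their intersections with $S$ lie in disjoint TSs.
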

\begin{proof}
The proof is simple and we omit it here.
$\hfill\Box$\end{proof}

We use the following table to list $\dim{(\mathcal{PBF}_4)}$,
the dimension of the subspace of differentially $4$-uniform
permutations obtained in Theorem \ref{LowWeight} for $6\leq n\leq 14$.
It seems that $\dim{(\mathcal{PBF}_4)}=|\mathcal{X}|=2^{n-2}$. This hints that
the dimension of $\mathcal{PBF}_4$ is about $\frac{3}{4}$ of that of the whole space $\mathcal{PBF}$.

\begin{center}
   \tabcaption{Dimension of the differentially $4$-uniform
     permutations on $\gf_{2^n}$ obtained in Thoerem \ref{LowWeight},
     $6\leq n\leq 14$ and $n$ even}
       \begin{tabular}{|c|c|c|c|} \hline
       \label{table-thm1}
       $n$    & $\dim(\mathcal{PBF}_4)$ (Thm \ref{LowWeight})   &  $\dim(\mathcal{PBF})$    \\ \hline
	$6$    & $16$   & $22$      \\ \hline
	$8$    & $64$   & $86$        \\ \hline
	$10$    & $256$   & $342$        \\ \hline
	$12$    & $1024$   & $1366$        \\ \hline
	$14$    & $4096$   & $5462$        \\ \hline
 \end{tabular}
\end{center}

\section{Conclusions}
In this paper we propose a particular type of Boolean functions, preferred Boolean functions,
to characterize the preferred functions. This enables us to give a more efficient method
to construct new differentially $4$-uniform permutations over $\gf_{2^{2k}}$.
Furthermore, it is proven that such Boolean functions can be
determined by solving linear equations. Hence the number of them can be determined
by computing the rank of the coefficient matrix. As an application, we show that
the number of CCZ-inequivalent differentially $4$-uniform permutations
over $\gf_{2^n}$ ($n$ even) is at least
$2^{\frac{2^n+2}{3}-4n^2-2n}$,  which implies the number of the CCZ-inequivalent classes
of such permutations grow exponentially when $n$ grows. This positively answer an open
problem proposed in \cite{QTTL}. Finally, we study the non-decomposable preferred Boolean
functions, and use them to construct more differentially $4$-uniform permutations explicitly.
The obtained functions in this paper may provide more choices for the
design of Substitution boxes.

At last, we make a remark on the {number of} permutations on $\gf_{2^n}$
($n$ odd) with {differential uniformity at most $4$ and with algebraic degree $n-1$}.
Recall that when $n $ is odd, {the inverse function {$x^{-1}$} is an APN permutation on $\gf_{2^n}$}.
It is clear that the differential uniformity of $G(x)=x^{-1}+f(x)$
is at most $4$ and with algebraic degree $n-1$, where $f\in\mathcal{BF}_n$ is any Boolean function.
By \cite[Result 2]{QTTL} (see proof in \cite[Lemma1]{4uni-seta}),
there exist $2^{2^{n-1}}$ Boolean functions $f$ such that $G(x)=x^{-1}+f(x)$ is a permutation.
As what we show in the proof of Theorem \ref{main1}(iii),
for any $(n,n)$-function, there are at most
$(2^n)^{4n+2}=2^{4n^2+2n}$ functions which are CCZ-equivalent to it.
Therefore we have at least
$$
\frac{2^{2^{n-1}}}{2^{4n^2+2n}} = 2^{2^{n-1}-4n^2-2n}
$$
CCZ-inequivalent permutations over $\gf_{2^n}$ ($n$ odd) with differential uniformity at most 4.
It is easy to see that such number grows exponentially when $n$ increases.

\section*{Acknowledgement}
{We would like to thank the anonymous reviewer for the valuable comments which significantly improve
the quality and presentation of this paper.}

Part of this work was done when the first author visited Hong Kong University of Science and Technology.
He would like to thank Prof. Cunsheng Ding for the kind hospitality and discussions during this period.

The research of L. Qu is supported by the National Natural Science Foundation of China (No. 61272484),
the Research Project of National University of Defense Technology under Grant CJ 13-02-01 and
the Program for New Century Excellent Talents in University (NCET).
The research of C. Li is supported by the National Basic Research Program of China 2013CB338002 and the
open research fund of Science and Technology on Information Assurance Laboratory (Grant No. KJ-12-02).

\end{document}